\newif\iflipics
\title{Parameterized dynamic data structure for Split Completion} 
\author{Konrad Majewski}{Institute of Informatics, University of Warsaw, Poland}{k.majewski@mimuw.edu.pl}{}{}
\authorrunning{K. Majewski, Mi. Pilipczuk, A. Zych-Pawlewicz} 
\keywords{parameterized complexity, dynamic data structures, split graphs} 
\title{Parameterized dynamic data structure for Split Completion
\thanks{This work is a part of project BOBR that has received funding from the European Research Council (ERC) under the European Union’s Horizon 2020 research and innovation programme (grant agreement No. 948057).}
}
\author{Konrad Majewski\thanks{Institute of Informatics, University of Warsaw, Poland (\texttt{k.majewski@mimuw.edu.pl})} \and
Michał Pilipczuk\thanks{Institute of Informatics, University of Warsaw, Poland (\texttt{michal.pilipczuk@mimuw.edu.pl})} \and
Anna Zych-Pawlewicz\thanks{Institute of Informatics, University of Warsaw, Poland (\texttt{anka@mimuw.edu.pl})}}
\newtheorem{theorem}{Theorem}[section]
\newtheorem{lemma}[theorem]{Lemma}
\crefname{claim}{Claim}{Claims}
\newtheorem{claim}[theorem]{Claim}
\newcommand{\cqed}{\ensuremath{\lhd}}
\newenvironment{claimproof}{\par
	\pushQED{\cqed}%
	\normalfont \topsep6\p@\@plus6\p@\relax
	\trivlist
	\item\relax
	{\itshape
		Proof of the claim\@addpunct{.}}\hspace\labelsep\ignorespaces
}{%
	\hfill\popQED\endtrivlist\@endpefalse
}
\newtheorem{fact}[theorem]{Fact}
\newcommand{\Inline}[1]{#1\xspace}
\newcommand{\ceil}[1]{\left\lceil #1 \right\rceil}
\newcommand{\symd}{\triangle}
\newcommand{\splittance}{\mathsf{splittance}}
\newcommand{\oper}[1]{\mathtt{#1}}
\newcommand{\answer}[1]{\Inline{\textsf{#1}}}
\newcommand{\Non}{\overline{N}}
\newcommand{\Oh}{\mathcal{O}}
\newcommand{\Ohtilde}{\widetilde{\mathcal{O}}}
\newcommand{\indic}{\mathbbm{1}}
\newcommand{\E}{\mathbb{E}\,}
\newcommand{\N}{\mathbb{N}}
\newcommand{\Cc}{\mathcal{C}}
\newcommand{\Fc}{\mathcal{F}}
\newcommand{\Lc}{\mathcal{L}}
\renewcommand{\Pr}{\mathbb{P}}
\renewcommand{\leq}{\leqslant}
\renewcommand{\geq}{\geqslant}
\newcommand{\app}{\spadesuit}
\newcommand{\desc}[2]{{#1}^{\textrm{#2}}}
\newcommand{\Vmoved}{\desc{V}{moved}}
\newcommand{\Emod}{\desc{E}{mod}}
\newcommand{\Gtmp}{\desc{G}{tmp}}
\newcommand{\Etmp}{\desc{E}{tmp}}
\newcommand{\NAsample}{\desc{\overline{N}}{sample}_A}
\newcommand{\NBsample}{\desc{N}{sample}_B}
\newcommand{\ansSplit}{\answer{Split}}
\newcommand{\edgesB}{\oper{edgesB}}
\newcommand{\nonEdgesA}{\oper{nonEdgesA}}
\newcommand{\yesinstance}{\Inline{\textsc{Yes}-instance}}
\newcommand{\noinstance}{\Inline{\textsc{No}-instance}}
\newcommand{\dsfont}[1]{\ensuremath{\mathsf{#1}}}
\newcommand{\dsplit}{\dsfont{DSplit}}
\newcommand{\promisenl}{\dsfont{PromiseNL}}
\newcommand{\promisens}{\dsfont{PromiseNS}}
\newcommand{\nsample}{\dsfont{Wrapper}}
\newcommand{\invariant}[1]{\Inline{(\textit{#1})}}
\newcommand{\iEdge}{\invariant{edges}}
\newcommand{\iUpd}{\invariant{updates}}
\newcommand{\iProm}{\invariant{promise}}
\newcommand{\sumset}{\ensuremath{\texttt{sum}}}
\newcommand{\claimsdegreetext}{One can compute the value of $|N|$ in time $\Ohtilde(k)$.}
\newcommand{\claimcolortext}{With probability at least $1 - n^{-d}$, the following event holds: For every $b \in N$, there is an index
$i \in [\gamma d \log n]$ such that $b$ is exposed by $\chi_i$.}
\newcommand{\claimretrievetext}{Let $i \in [\gamma d \log n]$ be an index and
$c \in [\ell]$ be a color such that there is a vertex $b$
exposed by $\chi_i$ with color $\chi_i(b) = c$.
Then, given $c$ and $i$, one can retrieve the identifier of $b$ in time $\Ohtilde(k)$.}
\newcommand{\claimsmallNtext}{There is $i \in [\log n]$ so that
    $\Pr\left(|N_{i, j}| \not\in (12\ell, 72\ell) \textrm{ for all } j \in [d \log n]\right) \leq n^{-d\ell}$.}
\newcommand{\claimintersectiontext}{It holds that $|A_s \cap B_t| \leq \Oh(\sqrt{r + k})$
and $|A_t \cap B_s| \leq \Oh(\sqrt{r + k})$.}
\newcommand{\claimsubtext}{Let $a_1, a_2 \in A$ be vertices such that $a_1a_2 \not\in E(G)$. Then, at least one of the following holds:
\begin{itemize}
    \item there is a subset $U \subseteq V(G)$ such that $\{a_1, a_2\} \subseteq U$ and $G[U]$ is isomorphic to $2K_2$ or $C_4$;
    \item at least one of the vertices $a_1, a_2$ has at most $3\sqrt{k}$ neighbors in $B$.
\end{itemize}
Moreover, in time $\Ohtilde(k^2 d^2)$ we can find out
which of the above two cases holds,
and return either a set $U$ in the first case,
or the neighborhood of $a_1$ or $a_2$ in $B$
in the second case.
The answer is correct with probability at least
$1 - \Oh(n^{-d})$.}
\begin{document}
\maketitle

\iflipics
\else 
\thispagestyle{empty}
\begin{textblock}{20}(-1.9, 8.2)
    \includegraphics[width=40px]{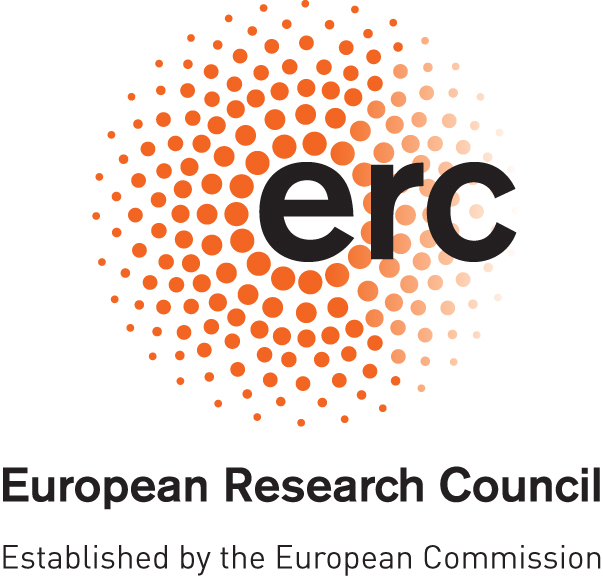}%
\end{textblock}
\begin{textblock}{20}(-2.15, 8.6)
    \includegraphics[width=60px]{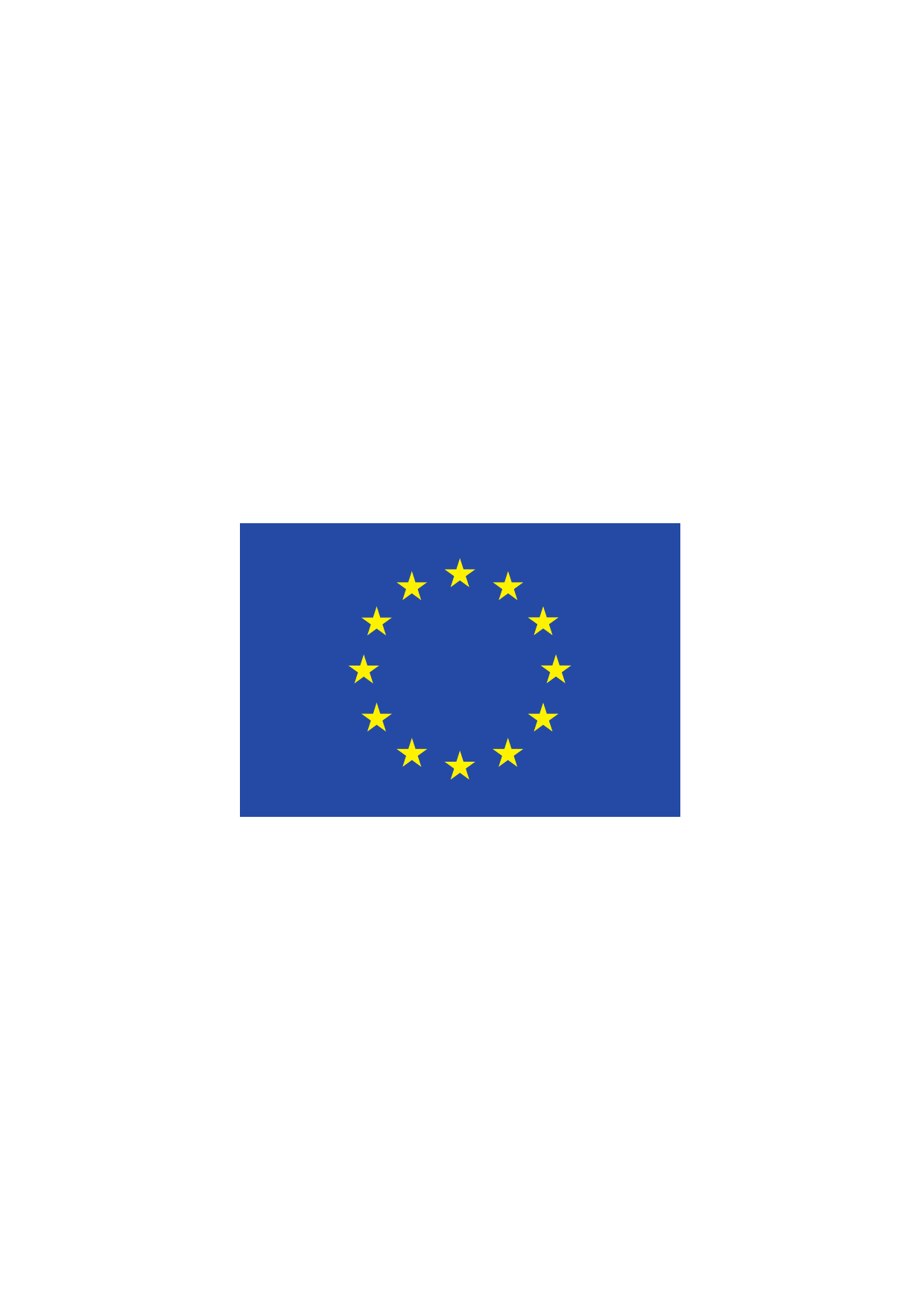}%
\end{textblock}
\fi

\begin{abstract}
We design a randomized data structure that, for a fully dynamic graph $G$ updated by edge insertions and deletions and integers $k, d$ fixed upon initialization, maintains the answer to the {\sc{Split Completion}} problem: whether one can add $k$ edges to $G$ to obtain a split graph. The data structure can be initialized on an edgeless $n$-vertex graph in time $n \cdot (k d \cdot \log n)^{\Oh(1)}$, and the amortized time complexity of an update is $5^k \cdot (k d \cdot \log n)^{\Oh(1)}$. The answer provided by the data structure is correct with probability $1-\Oh(n^{-d})$.

\end{abstract}

\section{Introduction}
\label{sec:introduction}

In the field of {\em{parameterized algorithms}}, one measures resources used by an algorithm not only in terms of the total size of the considered instance, but also in terms of auxiliary quantitative measures associated with the instance, called {\em{parameters}}. Recently there is a growing interest in applying this principle to the area of {\em{dynamic data structures}}. In this context, we typically consider an instance $I$ of a fixed problem of interest, with an associated parameter $k$. The instance is dynamic, in the sense that it is updated over time by problem-specific update operations, while for simplicity we assume that the parameter stays intact. The goal is to design a data structure that would maintain whether $I$ is a yes-instance of the problem under the updates to $I$. Here we allow the update time to depend in any computable way on the parameter and sublinearly on the instance size; for instance, we are insterested in (possibly amortized) update times of the form $f(k)$, $f(k)\cdot (\log n)^{\Oh(1)}$, or $f(k)\cdot n^{o(1)}$, where $f$ is some computable function and $n$ is the size of $I$. Note that we allow $f$ to be superpolynomial, so this framework may be applied even to $\mathsf{NP}$-hard problems, as long as their static parameterized variants are fixed-parameter tractable.

Parameterized dynamic data structures were first systematically investigated by Iwata and Oka~\cite{IwataO14}, followed by
Alman et al.~\cite{AlmanMW20}. These works provided data structures with update times $f(k)$ or $f(k)\cdot (\log n)^{\Oh(1)}$ for several classic problems such as {\sc{Vertex Cover}}, {\sc{Cluster Vertex Deletion}}, {\sc{Hitting Set}}, {\sc{Feedback Vertex Set}}, or {\sc{Longest Path}}. Other recent advances include data structures for maintaining various graph decompositions together with runs of dynamic programming procedures~\cite{ChenCDFHNPPSWZ21,DvorakKT14,DvorakT13,KorhonenMNPS23,MajewskiPS23}, treatment of parameterized string problems from the dynamic perspective~\cite{OlkowskiPRWZ23}, and even an application of the framework in the context of timed automata~\cite{GrezMPPR22}.

A topic within parameterized algorithms that could be particularly productive from the point of view of dynamic data structures is that of {\em{graph modification problems}}. In this context, we fix a graph class $\Cc$ and a set of graph operations $\Pi$ (e.g. vertex deletion, edge deletion, edge insertion), and consider the following parameterized problem: given a graph $G$ and a parameter $k$, decide whether one can apply at most $k$ operations from $\Pi$ to $G$ in order to obtain a graph belonging to $\Cc$. By instantiating different graph classes $\Cc$ and sets of operations $\Pi$, we obtain a wealth of parameterized problems with vastly different complexities, highly dependent on the combinatorics of the class $\Cc$ in question. The most widely studied are {\em{vertex deletion problems}} (only vertex deletions are allowed), {\em{edge deletion}} and {\em{completion problems}} (only edge deletions, respectively insertions, are allowed), and {\em{editing problems}} (both edge deletions and edge insertions are allowed).

Note that {\sc{Vertex Cover}}, {\sc{Feedback Vertex Set}}, and {\sc{Cluster Vertex Deletion}} can be understood as vertex deletion problems, for the classes of edgeless, acyclic, and cluster graphs, respectively. Thus, the results of Iwata and Oka~\cite{IwataO14} and of Alman et al.~\cite{AlmanMW20} already give parameterized dynamic data structures for some basic graph modification problems, but the impressive volume of work on static parameterized algorithms for such problems suggests that there is much more to be explored.

In this work we focus on the case of {\em{split graphs}}; recall that a graph $G$ is a split graph if the vertices of $G$ can be partitioned into sets $C$ and $I$ so that $C$ is a clique and $I$ is an independent set. As proved by F\"oldes and Hammer~\cite{FoldesH77}, split graphs are exactly graphs that exclude $2K_2$, $C_4$, and $C_5$ as induced subgraphs. Hence, a standard branching strategy solves {\sc{Split Vertex Deletion}} in time $5^k\cdot n^{\Oh(1)}$ and {\sc{Split Completion}} in time $5^k\cdot n^{\Oh(1)}$ (note that {\sc{Split Edge Deletion}} is equivalent to {\sc{Split Completion}} in the complement of the given graph). In fact, faster algorithms are known: {\sc{Split Vertex Deletion}} can be solved in time $1.2738^k\cdot k^{\Oh(\log k)}+n^{\Oh(1)}$~\cite{CyganP13}, while {\sc{Split Completion}} can be solved in subexponential parameterized time, more precisely in time $k^{\Oh(\sqrt{k})}\cdot n^{\Oh(1)}$, and admits a kernel with $\Oh(k^2)$ vertices~\cite{GhoshK0MPRR15}. Somewhat surprisingly, {\sc{Split Editing}} is polynomial-time solvable~\cite{TheSplittance}, while both {\sc{Split Vertex Deletion}} and {\sc{Split Completion}} remain $\mathsf{NP}$-hard~\cite{LewisY80,NatanzonSS01}. This makes graph modification problems related to split graphs very well understood in the static setting, and of remarkably low complexity, which suggests that they may serve as a suitable testbed for considerations from the point of view of dynamic data structures.

In fact, split graphs have already been considered in the dynamic setting. Ibarra~\cite{Ibarra08} gave a dynamic data structure with constant update time for the membership problem: the data structure maintains whether the graph in question is a split graph. This can be understood as the treatment of (any) modification problem for parameter $k$ equal to~$0$. Further aspects of dynamic maintenance of split graphs were investigated by Heggernes and Mancini~\cite{HeggernesM09}.

\subparagraph*{Our contribution.} In this work we propose a randomized dynamic data structure for the {\sc{Split Completion}} problem whose amortized update time depends polylogarithmically on the size of the graph. Formally, we prove the following result.

\newcommand{\mainthmtext}{There is a randomized data structure that for a fully dynamic graph $G$, updated by edge deletions and edge insertions, and a parameter $k$ fixed upon initialization, maintains the answer to the following question: can one add at most $k$ edges to $G$ to obtain a split graph. The data structure can be initialized on an edgeless $n$-vertex graph and an accuracy parameter $d\in \mathbb{N}$ in time $k^{\Oh(1)} \cdot d^2 \cdot n \cdot (\log n)^{\Oh(1)}$, and the amortized time complexity of updates is $5^k \cdot k^{\Oh(1)} \cdot d^2 \cdot (\log n)^{\Oh(1)}$. At all times, the answer provided by the data structure is correct with probability at least $1-\Oh(n^{-d})$.}

\begin{theorem}\label{thm:main}
\mainthmtext
\end{theorem}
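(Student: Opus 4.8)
The plan is to build the data structure in layers, each one reducing the problem to a more structured setting, and to use the characterization that $G$ can be completed to a split graph with $k$ edges iff $G$ has no induced $2K_2$, $C_4$, or $C_5$ "too deeply" — more precisely, iff the \emph{splittance-like} obstruction budget is not exceeded, measured by the number of edits needed. The first layer is the classical branching: by Földes--Hammer, a non-split graph contains an induced copy of $2K_2$, $C_4$, or $C_5$, and to fix it by additions we must add at least one of a bounded number of edges inside this obstruction. This gives a search tree of size $5^k$, at each leaf of which we must recognize whether the (annotated) graph, with some forced edges already added, is split. So the heart of the matter is to maintain, under edge updates, (i) a witness obstruction $2K_2/C_4/C_5$ if one exists, or a certificate that none does, and (ii) enough auxiliary structure to run the branching on top. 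Because each branching node modifies the graph by adding at most $k$ edges, and there are $5^k$ nodes, we can afford to spend $(kd\log n)^{\Oh(1)}$ per node, provided finding a single obstruction (or certifying split-ness) costs only $(kd\log n)^{\Oh(1)}$ amortized.

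The key reduction is to the data structure \dsplit{} (and its promise variants \promisenl{}, \promisens{}, and the \nsample{} wrapper referenced in the preamble): these maintain, for a dynamic graph that is \emph{promised} to be at bounded splittance, a canonical split partition $(A,B)$ with $A$ a near-clique and $B$ a near-independent set, together with sampling/colouring structures (the $\chi_i$ colourings, the sets $N_{i,j}$, the sum-sets $\sumset$) that let us, in time $\Ohtilde(k)$ per query, enumerate the few "bad" vertices — those with many non-neighbours in $A$ or many neighbours in $B$ — and hence locate an induced obstruction among them. The claims stated in the preamble (\claimsdegreetext; the colouring claim \claimcolortext; the retrieval claim \claimretrievetext; the small-$N$ claim \claimsmallNtext; the intersection claim \claimintersectiontext; and the subgraph-finding claim \claimsubtext) are exactly the ingredients that make this possible: \claimsubtext{} in particular says that given a non-edge $a_1a_2$ inside $A$ we can, in time $\Ohtilde(k^2 d^2)$, either produce an induced $2K_2$ or $C_4$ through it, or certify that one of $a_1,a_2$ has few neighbours in $B$ (so it "should" have been placed in $B$, and we can re-optimize the partition). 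Iterating this a bounded number of times either outputs an obstruction or reaches a genuine split partition; the randomization and the $\Oh(n^{-d})$ error come solely from the sampling/colouring steps, and a union bound over the $n^{\Oh(1)}$ operations and $5^k$ branches (absorbed into $d$ by rescaling) keeps the total error at $\Oh(n^{-d})$.

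With the obstruction-finding primitive in hand, the branching algorithm is run \emph{lazily}: after each edge update we re-examine only the part of the search tree that could have changed, charging the work to a potential function that counts discrepancies between the current split partition and the maintained one; standard amortization (each edge insertion/deletion can create $\Oh(1)$ new discrepancies, each costing $(kd\log n)^{\Oh(1)}$ to repair, and each branching node costs $5^k$-independent polylogarithmic time to revisit) yields the claimed amortized bound $5^k\cdot k^{\Oh(1)}\cdot d^2\cdot(\log n)^{\Oh(1)}$. Initialization on the edgeless graph is trivial for \dsplit{} (the partition is $(\emptyset, V(G))$, splittance $0$) and the sampling structures are set up in $k^{\Oh(1)}\cdot d^2\cdot n\cdot(\log n)^{\Oh(1)}$ time by the \nsample{} wrapper. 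The main obstacle I expect is the \emph{stability} of the canonical split partition under updates: a single edge flip can in principle force a large reshuffling between $A$ and $B$, so the delicate part is to define the partition canonically enough (via the splittance-minimizing rule of~\cite{TheSplittance}, broken by a fixed tie-break) that updates touch it only $\Oh(1)$-locally in amortized terms, and to prove the accompanying potential-function bound — this is where the intersection claim \claimintersectiontext{} (bounding $|A_s\cap B_t|$ by $\Oh(\sqrt{r+k})$ between two candidate partitions) does the real work, ensuring that transitions between successive optimal partitions are small.
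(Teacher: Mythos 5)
Your high-level architecture matches the paper: maintain a splittance-optimal partition $(A,B)$ via the Hammer--Simeone/Ibarra degree rule, build sampling and colour-coding structures over it ($\promisenl$, $\promisens$, $\nsample$), implement a $\oper{findObstruction}()$ primitive that locates an induced $2K_2/C_4/C_5$ in $\Ohtilde(k^{\Oh(1)} d^2)$ time by anchoring the search on a member of $\nonEdgesA\cup\edgesB$, and run the $5^k$ branching on top of it. The error-rescaling of $d$ to absorb the union bound over branches is also the right move.

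However, your account of where the amortization comes from is wrong, and this is a genuine gap. You propose to maintain the $5^k$ search tree \emph{lazily}, ``re-examining only the part of the search tree that could have changed,'' and you attribute the amortized bound to a potential function counting discrepancies between the current and maintained split partitions. Neither idea is needed, and neither is developed to the point where it could work. A single edge update can change the obstruction found at the very root of the search tree, so nothing prevents the entire tree from changing; lazy re-examination would require a nontrivial incrementalization argument you do not give, and it is unclear one exists. The correct --- and much simpler --- move is to re-run the \emph{entire} $5^k$ branching from scratch after every update: with $\oper{findObstruction}()$ costing $\Ohtilde(k^{\Oh(1)} d^2)$ per node and at most $\Oh(5^k)$ nodes, a single full run already fits in the claimed per-update budget, so there is nothing to memoize. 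Likewise, Ibarra's data structure gives \emph{worst-case} $\Oh(1)$ vertex movements per update, so no potential function for partition discrepancies is needed. The only source of amortization in the actual argument sits inside $\nsample$: when $\splittance(G)>k$, updates are enqueued rather than applied to $\promisenl$/$\promisens$, and when the splittance drops back to $\leq k$ after $r$ steps, the whole batch is applied at once; the claim $|A_s\cap B_t|\leq\Oh(\sqrt{r+k})$ is what makes this batch application affordable, not a bound on ``transitions between successive optimal partitions'' (those are already $\Oh(1)$ each by Ibarra). As written, your amortization story would not compile into a correct proof, even though all the right ingredients are on the table.
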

\newtheorem*{mainthm}{\cref{thm:main}}

We note that by a tiny modification of the algorithm, we can also provide an analogous data structure for the dynamic {\sc{Split Edge Deletion}} problem, with same complexity~guarantees.

\subparagraph*{Overview.} Let us now discuss the main ideas behind the proof of \cref{thm:main}. The basic approach is to try to dynamize the standard $5^k\cdot n^{\Oh(1)}$-time branching algorithm, which works as follows: recursively find an obstruction --- an induced subgraph belonging to $\{2K_2,C_4,C_5\}$ --- and branch into at most $5$ different ways on how to destroy this obstruction by adding an edge. To implement this strategy in the dynamic setting, we need to have an efficient way of finding an obstruction in the current graph, or concluding that it is split. Indeed, if we get such a localization procedure, then the branching algorithm can be implemented using data structure's own methods for inserting and removing edges.
\iflipics
A detailed description is presented in \cref{sec:branching}.
\else
\fi

The next observation is that if $(G,k)$ is a \yesinstance of the {\sc{Split Completion}} problem, then in particular it is also a \yesinstance of the {\sc{Split Editing}} problem: one can obtain a split graph from $G$ by adding or removing at most $k$ edges. This is equivalent to the following statement: there is a partition $(A,B)$ of the vertex set of $G$ such that the number of edges with both endpoints in $B$ plus the number of non-edges with both endpoints in $A$ is at most $k$. For a partition $(A,B)$, this number is called the {\em{splittance}} of $(A,B)$. As observed by Hammer and Simeone~\cite{TheSplittance}, a partition with optimal splittance can be computed in polynomial time using a simple greedy argument. Ibarra~\cite{Ibarra08} dynamized this observation and showed that a partition $(A,B)$ with optimum splittance can be maintained with $\Ohtilde(1)$ worst-case update time. Moreover, upon every edge update, the data structure of Ibarra moves only $\Oh(1)$ vertices between the parts $A$ and $B$.

Therefore, thanks to the result of Ibarra we may assume that a partition $(A,B)$ with optimum splittance is available to us; this gives us a basic structural understanding of $G$. If the splittance of $(A,B)$ is larger than $k$, then we are certain that $(G,k)$ is a \noinstance of {\sc{Split Completion}}. Hence, from now on we will work in the {\em{promise model}}: We will assume that the maintained partition $(A,B)$ has splittance bounded by $k$ at all times. Lifting the result from the promise model to the general setting can be done by making a wrapper data structure that works as follows: while the splittance is larger than $k$, the new updates are put on a queue instead of directly implemented, and when the splittance becomes at most $k$ again, all the enqued updates are implemented in one large batch. We remark that this is the only source of amortization in our data structure: in the promise model, the claimed running time guarantees are worst-case.

The wrapper data structure is presented in \cref{sec:edges-inside}. We note that together with the partition $(A,B)$, the wrapper data structure also maintains sets $\nonEdgesA$ and $\edgesB$ consisting of non-edges with both endpoints in $A$ and edges with both endpoints in $B$, with a guarantee that they are correct whenever the splittance of $(A,B)$ is at most $k$. Note that then $|\nonEdgesA|+|\edgesB|\leq k$.

The next step is to implement efficient access to edges and non-edges crossing the partition $(A,B)$. More precisely, we need $\Ohtilde(k^{\Oh(1)})$-time queries that given $\ell\in \Oh(k)$, are able to list:
\begin{itemize}
 \item for a given $a\in A$, any set of $\ell$ neighbors of $a$ in $B$; and
 \item for a given $b\in B$, any set of $\ell$ non-neighbors of $b$ in $A$.
\end{itemize}
If $a$ has fewer than $\ell$ neighbors in $B$, then the query should list all of them; similarly for $b$. As the second query is symmetric to the first one, we focus on the latter.

\newcommand{\Ss}{\mathcal{S}}

First, consider the case when $a$ has exactly one neighbor in $B$. Then to quickly recover this neighbor, we can maintain in the data structure the following information:
\begin{itemize}
 \item the sum $s_A$ of the identifiers of all the vertices in $A$; and
 \item for every vertex $u$, the sum $s_u$ of the identifiers of all the neighbors of $u$.
\end{itemize}
Then given $a$, the identifier of the sole neighbor of $a$ in $B$ can be obtained by taking $s_a-s_A$ and adding the identifiers of all non-neighbors of $a$ in $A$, which can be listed in time $\Ohtilde(k)$ using the set $\nonEdgesA$. This trick can be lifted to listing all neighbors of $a$ in $B$ assuming that that their number is $\Oh(k)$ using the technique of {\em{color coding}} of Alon, Yuster, and Zwick~\cite{AlonYZ95}. Finally, to allow listing $\ell=\Oh(k)$ neighbors from possibly much larger neighborhoods, we sample in advance a polylogarithmically-sized family $\Ss$ of vertex subsets of varying sizes so that for any possible neighborhood $N$, with high probability among the sampled sets there will be some $S\in \Ss$ such that $\ell\leq |N\cap S|\leq \Oh(\ell)$. Then we apply the ideas presented above to list the members of $N\cap S$, whose number is already suitably bounded. We remark that this final trick is the only element of the reasoning that we do not know how to derandomize.
Implementation of this part of the proof is in \cref{sec:edges-across}.

The last step is to implement the procedure for finding of an obstruction. Here, the main idea is that every obstruction present in $G$ must contain a non-edge with both endpoints in $A$ or an edge with both endpoints in $B$; for otherwise it would be a split graph. For these (non-)edges, we have at most $k$ candidates contained in the lists $\nonEdgesA$ and $\edgesB$. Therefore, while searching for an obstruction, we can immediately ``anchor'' two of its vertices by guessing a member of $\nonEdgesA\cup \edgesB$ that is contained in the obstruction. Localizing the other two or three vertices of the obstruction requires a skillful juggle of the functionality specified above, in particular the methods for enumeration of (non-)neighbors across the partition $(A,B)$. This part of the argument can be found in \cref{sec:obstacles}.

\section{Preliminaries}
\label{sec:preliminaries}

For a positive integer $n$, we define $[n] \coloneqq \{1, 2, \ldots, n\}$.
If $x$ is a positive real number, we write $[x]$ as a shorthand for $[\ceil{x}] = \{ 1, 2, \ldots, \ceil{x} \}$.
For two sets $A$ and $B$, we denote their symmetric difference by
$A \symd B= (A \setminus B) \cup (B \setminus A)$.
Given any sets $A, B, C$ we write $C = A \uplus B$ if $C = A \cup B$, and $A \cap B = \emptyset$, that is, if $(A, B)$ is a partition of $C$.
For a boolean condition $b$, we denote by $\indic_b$ the indicator function of $b$, that is, $\indic_b = 1$ if $b$ is true, and $\indic_b = 0$ otherwise.

We use the $\Ohtilde(\cdot)$ notation to hide polylogarithmic factors; that is, $\Ohtilde(f(n))$ is the class of all functions $g(n)$ that are upper-bounded by $c\cdot f(n) \cdot (\log n)^c$ for some constant $c$.

\subparagraph*{Graphs.}
All graphs considered in this paper are undirected and simple, that is, contain no loops or parallel edges. We often assume that the vertex set of a graph on $n$ vertices is $[n]$.

Let $G$ be a graph.
We denote the sets of vertices and of edges of $G$ by $V(G)$ and $E(G)$, respectively.
We refer to an edge $\{u, v \} \in E(G)$ as $uv$, for every pair of vertices $u$ and $v$.
For a vertex $v \in V(G)$, we denote the degree of $v$ by $d_G(v)$
and the set of all neighbors of $v$ by $N_G(v)$
(we may skip index $G$ if it is clear from the context).
Additionally, we define $\Non_G(v)$ as the set of non-neighbors of $v$, that is, $\Non_G(v) = V(G) \setminus (N_G(v) \cup \{ v \})$.

For a subset $A \subseteq V(G)$, we denote by $G[A]$ the subgraph of $G$ induced by $A$.
For a graph $F$, we say that $G$ is $F$-free if it does not contain an induced subgraph isomorphic to~$F$.
Similarly, for a family of graphs $\Fc$, a graph $G$ is $\Fc$-free if it is $F$-free for every $F \in \Fc$.

\subparagraph*{Split graphs.}
Let $G$ be a graph.
We say that $G$ is \emph{split} if its vertices can be partitioned into two sets $A$ and $B$ such that $G[A]$ is a clique, and $G[B]$ is an independent set.
For any partition $(A, B)$ of $V(G)$, we denote
\[
    \splittance_G(A, B) \coloneqq \binom{|A|}{2} - |E(G[A])| + |E(G[B])|,
\]
that is, $\splittance_G(A, B)$ counts the number of non-edges of $G$ within $A$ and edges of $G$ within $B$.
For the whole graph $G$, we write
\[
    \splittance(G) \coloneqq \min_{(A, B) \colon V(G) = A \uplus B} \splittance_G(A, B).
\]
One can observe that $\splittance(G)$ counts the minimal number of edge updates (insertions or deletions) that need to be done on $G$ in order to make the graph $G$ split.
In particular, $G$ is split if and only if $\splittance(G) = 0$. It turns out that a partition with optimum splittance can be computed greedily by taking a prefix of the vertex set ordered by decreasing degrees, as stated formally in the following observation of Hammer and Simeone~\cite{TheSplittance}.

\begin{fact}[\cite{TheSplittance}]
    \label{fact:split-degrees}
    Let $G$ be a graph on $n$ vertices, and let $d_1 \geq d_2 \geq \ldots \geq d_n$ be a sorted sequence of the vertex degrees of $G$.
    Define $m \coloneqq \max \{ i \in [n] \mid d_i \geq i - 1 \}$.
    Then,
    \[
        \splittance(G) = m (m - 1) - \sum_{i = 1}^m d_i + \sum_{i = m + 1}^n d_i = 0,
    \]
    and a certifying partition $(A, B)$ of $V(G)$ can be obtained by taking the vertices of $G$ corresponding to the degrees $(d_1, \ldots, d_m)$ and $(d_{m+1}, \ldots, d_n)$, respectively.
\end{fact}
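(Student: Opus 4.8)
The plan is to establish the Fact in three short moves: first derive a closed-form expression for $\splittance_G(A,B)$ of an \emph{arbitrary} partition in terms of the degree sequence; then reduce the minimization over all partitions to a one-dimensional minimization over the clique-side size $|A|$; and finally analyze the resulting function and show that its minimizer is exactly the index $m$.

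First I would fix a partition $(A,B)$, set $a \coloneqq |A|$, and let $e_A, e_B, e_{AB}$ be the numbers of edges of $G$ inside $A$, inside $B$, and between $A$ and $B$, respectively. Counting edge-endpoints within each part gives $\sum_{v\in A} d_G(v) = 2e_A + e_{AB}$ and $\sum_{v\in B} d_G(v) = 2e_B + e_{AB}$, so $e_B - e_A = \tfrac12\big(\sum_{v\in B} d_G(v) - \sum_{v\in A} d_G(v)\big)$. Substituting into the definition $\splittance_G(A,B) = \binom a2 - e_A + e_B$ yields
\[
  2\,\splittance_G(A,B) \;=\; a(a-1) \;-\; \sum_{v\in A} d_G(v) \;+\; \sum_{v\in B} d_G(v).
\]
Since $\sum_{v\in A} d_G(v) + \sum_{v\in B} d_G(v) = \sum_{v \in V(G)} d_G(v)$ does not depend on the partition, for a fixed value of $a$ the right-hand side is minimized exactly when $A$ consists of $a$ vertices of largest degree; for that choice its value is $f(a) \coloneqq a(a-1) - \sum_{i=1}^{a} d_i + \sum_{i=a+1}^{n} d_i$. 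Hence $2\,\splittance(G) = \min_{a \in \{0,1,\dots,n\}} f(a)$, and a certifying partition for a minimizing $a$ can always be taken to have the $a$ highest-degree vertices on the clique side.

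It remains to show that $f$ attains its minimum at $a = m$. A direct computation gives $f(a+1) - f(a) = 2\,(a - d_{a+1})$. Because $(d_i)$ is nonincreasing while $i-1$ is increasing, $d_i \ge i-1$ implies $d_{i-1} \ge d_i \ge i-1 \ge i-2$, so the set $\{\, i \in [n] : d_i \ge i-1 \,\}$ is an initial segment $\{1,\dots,m\}$. Consequently, for every $a < m$ we have $d_{a+1} \ge a$ and thus $f(a+1) \le f(a)$, whereas for every $a \ge m$ we have $d_{a+1} \le a-1 < a$ and thus $f(a+1) > f(a)$. Therefore $f$ is nonincreasing on $\{0,\dots,m\}$ and strictly increasing on $\{m,\dots,n\}$, so $\min_a f(a) = f(m)$. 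Combining the three steps, $2\,\splittance(G) = m(m-1) - \sum_{i=1}^m d_i + \sum_{i=m+1}^n d_i$, attained by the partition putting the $m$ vertices of largest degree on the clique side --- this is the asserted closed form (up to the normalization factor $\tfrac12$) together with the asserted certifying partition.

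I do not expect any genuine difficulty in this argument; the only point deserving a moment's care is the last step --- the observation that $\{\, i : d_i \ge i-1 \,\}$ is an initial segment of $[n]$ --- since this is precisely what guarantees that the greedy stopping index $m$ is the \emph{global} minimizer of $f$ and not merely a local one.
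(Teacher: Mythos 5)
The paper does not prove this Fact itself; it is stated with only a citation to Hammer and Simeone, so there is no in-paper argument to compare against. Your proof is the standard one and is correct: the degree/edge-count identity gives $2\,\splittance_G(A,B)=a(a-1)-\sum_{v\in A}d(v)+\sum_{v\in B}d(v)$, for fixed $a=|A|$ the optimum is to put the $a$ highest-degree vertices in $A$, and the forward difference $f(a+1)-f(a)=2\bigl(a-d_{a+1}\bigr)$ combined with the observation that $\{\,i:d_i\ge i-1\,\}$ is an initial segment of $[n]$ identifies $a=m$ as the global minimizer. You are also right to flag the normalization: the closed form you derive is $2\,\splittance(G)=m(m-1)-\sum_{i\le m}d_i+\sum_{i>m}d_i$, matching the original Hammer--Simeone statement (which carries a factor $\tfrac12$), whereas the display in the paper omits that factor, and its trailing ``$=0$'' is evidently a typo rather than part of the intended claim.
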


We will also make use of another characterization of split graphs, namely via a finite family of forbidden induced subgraphs.

\begin{fact}[\cite{FoldesH77}]
    \label{fact:split-forbidden}
    A graph $G$ is split if and only if $G$ is $\Fc$-free, where $\Fc = \{2K_2, C_4, C_5\}$.
\end{fact}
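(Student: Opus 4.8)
The "final statement" in this excerpt is \cref{fact:split-forbidden} (the Földes–Hammer characterization), which the paper cites from [FoldesH77] rather than proving. I will sketch a proof of this characterization.

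The plan is to prove the characterization in the nontrivial direction: every $\{2K_2, C_4, C_5\}$-free graph is split. The forward direction is routine, since if $G$ is split with partition $(C, I)$ then any induced $2K_2$, $C_4$, or $C_5$ would need at least three vertices on one side, forcing either a non-edge inside the clique $C$ or an edge inside the independent set $I$ — a direct contradiction. So the real work is the converse.

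First I would set up the greedy construction suggested by \cref{fact:split-degrees}: order the vertices $v_1, \ldots, v_n$ by non-increasing degree, let $m = \max\{i : d_{v_i} \ge i-1\}$, and set $A = \{v_1, \ldots, v_m\}$, $B = \{v_{m+1}, \ldots, v_n\}$. I would then argue, by a counting/exchange argument, that if $G$ is $\{2K_2, C_4, C_5\}$-free then $G[A]$ is a clique and $G[B]$ is an independent set. The cleaner route, which I expect to be the main line of the proof, is the following local argument. Suppose for contradiction $G$ is $\Fc$-free but not split. Take a partition $(C, I)$ minimizing $\splittance_G(C, I)$ (so minimizing the number of "bad pairs": non-edges inside $C$ plus edges inside $I$); since $G$ is not split this minimum is at least $1$. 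If there is a non-edge $xy$ with $x, y \in C$, then by minimality moving $x$ to $I$ does not help, so $x$ must have a neighbor in $I$; likewise $y$ has a neighbor in $I$. Similarly an edge inside $I$ forces both endpoints to have a non-neighbor in $C$. One then combines two bad pairs — or one bad pair together with the structure around it — to extract an induced subgraph on four or five vertices lying in $\{2K_2, C_4, C_5\}$, contradicting $\Fc$-freeness. The case analysis splits on whether the bad pairs are both non-edges in $C$, both edges in $I$, or one of each, and on the adjacencies between the four "witness" vertices produced.

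The key steps, in order, are: (i) prove the easy direction by the pigeonhole observation above; (ii) fix a splittance-minimal partition $(C, I)$ and record the consequences of minimality (each endpoint of a bad pair has a neighbor on the other side of the appropriate type); (iii) do the case analysis to locate a forbidden induced subgraph from two witness bad pairs, checking in each case that the induced subgraph is exactly $2K_2$, $C_4$, or $C_5$; (iv) conclude. The main obstacle is step (iii): ensuring the four or five chosen vertices are pairwise distinct and that the induced subgraph on them is \emph{exactly} one of the three forbidden graphs rather than something with an extra chord. Handling the chords typically requires invoking the minimality of the partition a second time (a chord would let us shuffle a witness vertex across and reduce the splittance) or re-choosing the witnesses, which is the delicate part of the argument. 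Alternatively, one can bypass the case analysis by working directly with the degree-ordered partition of \cref{fact:split-degrees}: if $v_i v_j \notin E(G)$ with $i < j \le m$, then $d_{v_i} \ge d_{v_j} \ge j - 1 \ge j - 2 \ge$ (number of neighbors of $v_j$ among $v_1, \ldots, v_m$), which forces $v_i$ to have many neighbors in $B$; a symmetric statement holds for an edge inside $B$; combining a non-edge in $A$ with an edge in $B$ (or two of the same type) and these degree surpluses again yields a $2K_2$, $C_4$, or $C_5$. Either way, the combinatorial extraction of the obstruction is where the care is needed.
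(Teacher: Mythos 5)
The paper states this characterization as a cited result of F\"oldes and Hammer~\cite{FoldesH77} and does not prove it, so there is no in-paper argument to compare against; you correctly note this yourself. What you offer is therefore a sketch of a classical proof, and it should be judged as such.

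Two issues. First, your argument for the easy direction is stated incorrectly: $2K_2$ and $C_4$ have only four vertices, so a clique/independent partition does \emph{not} force three vertices onto one side --- a $2$-$2$ split is possible and must be checked separately. (It is easy: in $2K_2$ the only candidate clique of size $2$ is one of the two edges, and then the other edge sits inside the independent side; similarly for $C_4$.) The pigeonhole phrasing only covers $C_5$. Second, and more substantively, the converse --- which you correctly flag as "the real work" --- is left as a gesture. The setup with a splittance-minimal partition $(C,I)$ and the observation that each endpoint of a bad pair has a witness neighbor/non-neighbor on the other side is fine, but the extraction of an induced $2K_2$, $C_4$, or $C_5$ from these witnesses is exactly the content of the theorem, and you explicitly defer it: you acknowledge that chord-handling and distinctness of the witness vertices are the delicate part without carrying it out. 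Until that case analysis is done (typically by a second appeal to minimality to rule out the chords that would otherwise spoil the induced subgraph), this is a plan, not a proof. Your alternative route via the degree-ordered partition of \cref{fact:split-degrees} also contains a slip: from $d_{v_j}\ge j-1$ you cannot conclude that $v_j$ has at most $j-2$ neighbors among $v_1,\ldots,v_m$; it could have as many as $m-2$, so the claimed surplus of neighbors of $v_i$ in $B$ does not follow as written.
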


\subparagraph*{Computation model and basic data structures.} We assume the standard word RAM model of computation with machine words of length $\Oh(\log n)$. As the most basic data structures, we use {\em{sets}} (representing a set of objects that can be updated by insertions and deletions, and queried for membership) and {\em{dictionaries}} (representing a set of key-value pairs that can be updated by insertions and deletions, and queried for the value associated with a given key). For both these data structures, we use standard implementations using any kind of self-balancing binary search trees, which offer all operations in worst-case time $\Oh(\log n)\subseteq \Ohtilde(1)$. In particular, graphs are represented as dictionaries mapping vertices to the sets of their neighbors. We refer to this representation as \emph{adjacency list representation}. It allows verifying the adjacency of a pair of vertices in time $\Ohtilde(1)$.

Most of our data structures are randomized. We would like to stress that the random choices happen only during the initialization, and therefore the events of incorrectness of further queries are {\em{not}} independent --- they all depend on the initial randomness.

\section{Dynamic splittance}
\label{sec:partition}

In this section, we discuss how to maintain the value of $\splittance(G)$ with a witnessing partition $(A, B)$ of $V(G)$ for a dynamic graph $G$.
The data structure is a simple adaptation of the data structure of Ibarra \cite{Ibarra08} who showed how to maintain whether a dynamic graph $G$ is split.
\iflipics
For the sake of self-containedness, the proof is presented in \cref{sec:app-splittance}.
\else
We include the proof for the sake of self-containedness.
\fi

\newcommand{\lemsplittext}{%
There exists a data structure $\dsplit[n]$ that runs on a dynamic graph $G = (V, E)$ on $n$ vertices, and maintains a partition $(A, B)$ of its vertices satisfying $\splittance_G(A, B) = \splittance(G)$.
The data structure supports the following operations:
\begin{itemize}
    \item $\oper{initialize}(n)$: fixes the set of vertices
    $V \coloneqq [n]$ for the entire run, sets the initial graph $G$
    to an edgeless graph $G \coloneqq (V, \emptyset)$ and
    sets the initial partition $(A,B)$ to
    $(A, B) \coloneqq (\emptyset, V)$. Runs in time $\Ohtilde(n)$.
    \item $\oper{update}(uv)$: inserts edge $uv$
    if $uv \not\in E(G)$ or removes edge $uv$ if $uv \in E(G)$.
    Let $G' \coloneqq (V, E(G) \symd \{ uv \})$ be the updated graph
    and $(A',B')$ be the updated partition for $G'$.
    The method returns a set $\Vmoved \subseteq V$ of size
    $|\Vmoved| \leq \Oh(1)$
    such that
    $
        (A', B') = (A \symd \Vmoved, B \symd \Vmoved)
    $.
    The running time is $\Ohtilde(1)$.
    \item{$\oper{splittance}()$: returns the current value of $\splittance_{G}(A, B) = \splittance(G)$ in time~$\Oh(1)$.}
\end{itemize}
}

\begin{lemma}\label{lem:splittance}
    \lemsplittext
\end{lemma}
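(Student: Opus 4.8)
The plan is to maintain, as the central invariant, a sorted list of the vertices of $G$ ordered by non-increasing degree, together with the prefix quantities that appear in \cref{fact:split-degrees}. Concretely, I would keep: a balanced search tree (or order-statistics tree) on the vertices keyed by the pair $(d_G(v), v)$ so that we can access the $i$-th vertex in the degree order in time $\Ohtilde(1)$; for each vertex its current degree $d_G(v)$; the index $m = \max\{i : d_i \ge i-1\}$; and running sums $\sum_{i=1}^m d_i$ and $\sum_{i=m+1}^n d_i$, from which $\splittance(G) = m(m-1) - \sum_{i=1}^m d_i + \sum_{i=m+1}^n d_i$ is read off in $\Oh(1)$ time by \cref{fact:split-degrees}. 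The partition is then $A = \{v : v \text{ is among the first } m \text{ in degree order}\}$, $B$ the rest; to answer membership queries in $\Oh(\log n)$ time we can again use the order-statistics tree, and to report the moved set we only need to look at the vertices whose degree-order rank crosses the threshold $m$ (old or new).

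For $\oper{initialize}(n)$ everything is trivial: all degrees are $0$, the degree order is arbitrary, $m = 1$ (since $d_1 = 0 \ge 0$), the sums are $0$, and $\splittance = 0$; building the tree takes $\Ohtilde(n)$. The work is in $\oper{update}(uv)$. Inserting or deleting the edge $uv$ changes the degrees of exactly two vertices, $u$ and $v$, each by $\pm 1$. A key observation, which I would isolate as a small claim, is that when a single degree changes by one, the sorted degree sequence changes only by swapping that vertex past at most one other vertex in the order (the vertex whose degree it now ties with or just passed), so $u$ and $v$ each move by at most one position in the degree order; hence only $\Oh(1)$ ranks change, $m$ changes by $\Oh(1)$, and the running sums can be updated in $\Oh(1)$ arithmetic operations after identifying the affected vertices. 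Recomputing $m$: since each $d_i$ changes by at most $1$ and only two of them change, the predicate ``$d_i \ge i-1$'' flips for only $\Oh(1)$ indices near the old $m$, so we can locate the new $m$ by checking a constant-size window around the old value using the order-statistics tree. Then $\Vmoved$ is exactly the set of vertices whose side changed, which is contained in the (constant-size) symmetric difference of the old and new length-$m$ prefixes of the degree order, so $|\Vmoved| = \Oh(1)$; updating the tree, the degrees, $m$ and the two sums all cost $\Ohtilde(1)$. The $\oper{splittance}()$ method just returns the stored value.

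The main obstacle — and the place where one must be a little careful rather than hand-wavy — is the bookkeeping that ties together three moving parts simultaneously: the positions of $u$ and $v$ in the degree order, the index $m$, and the two prefix/suffix sums, all of which must be kept consistent after a $\pm1$ degree change, and moreover the tie-breaking rule inside equal-degree blocks must be fixed once and for all (e.g. by vertex identifier) so that ``the first $m$ vertices'' is well defined and so that a degree change moves a vertex by at most one position. I would prove the ``moves by at most one position'' claim explicitly: if $d_G(v)$ increases by $1$, then in the new order $v$ can only overtake vertices that previously had degree exactly $d_G(v)+1$ and, among those, only the one adjacent to $v$'s block under the tie-break; symmetrically for a decrease. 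Combined with \cref{fact:split-degrees} this yields that $(A',B') = (A \symd \Vmoved, B \symd \Vmoved)$ with $|\Vmoved| \le \Oh(1)$ and all running times as claimed; this is the adaptation of Ibarra's argument~\cite{Ibarra08} from maintaining the bit ``$\splittance(G) = 0$'' to maintaining the value $\splittance(G)$ together with a witnessing partition.
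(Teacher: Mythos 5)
Your data-structure design (order-statistics tree keyed by degree, prefix sums, the threshold index $m$) is essentially the same as the paper's (which uses a doubly-linked degree list with dictionaries $\oper{firstD},\oper{lastD}$ and a pointer to $d_m$, following Ibarra). However, the claim you single out as the heart of the argument --- that under a tie-break fixed once and for all (e.g.\ by vertex identifier) a $\pm 1$ degree change ``moves a vertex by at most one position in the degree order'' --- is false, and your sketch of its proof (``$v$ can only overtake vertices that previously had degree exactly $d_G(v)+1$ and, among those, only the one adjacent to $v$'s block'') conflates two different objects: the sorted \emph{multiset} of degree values, which indeed changes at $\Oh(1)$ positions, and the sorted sequence of \emph{vertices}, in which a vertex can travel arbitrarily far. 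Concretely, take $n/2$ vertices all of degree $5$, ordered within the block by identifier, and let $v$ be the one in the middle; when $d(v)$ becomes $6$, under any fixed tie-break $v$ must overtake all $\Theta(n)$ degree-$5$ vertices that preceded it in the block, so its rank changes by $\Theta(n)$, not $\Oh(1)$.

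The gap is repairable, but it needs a genuinely different justification than the one you propose. Either (i) do as the paper does: when $d(v)$ goes from $\delta$ to $\delta+1$, \emph{swap} $v$ with the vertex occupying the first slot of the degree-$\delta$ block (using $\oper{firstD}/\oper{lastD}$), then bump that slot's value; this makes only $\Oh(1)$ entries of the list change and only $\Oh(1)$ vertices get relabeled, which together with Ibarra's fact that $|m'-m|\le\Oh(1)$ gives $|\Vmoved|=\Oh(1)$. Or (ii) keep your BST with a fixed tie-break, but prove the right invariant: even though $v$'s rank and the ranks of all vertices between $v$'s old and new positions change, the \emph{set} $A$ consisting of the first $m$ vertices changes by only $\Oh(1)$ elements (it is a prefix, and a shift inside a prefix does not change the prefix as a set --- only the boundary element and $v$ itself enter or leave, plus the $\Oh(1)$ adjustment from the change in $m$), and then identify those boundary elements via order-statistics queries. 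Likewise, updating the prefix sum $\sum_{i\le m} d_i$ should be argued not from ``only $\Oh(1)$ ranks change'' (false) but from ``the multiset of degrees over the length-$m$ prefix changes by $\Oh(1)$ insertions/deletions.'' As written, your main claim would not survive an attempt to prove it, so this is a real hole rather than a cosmetic one.
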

\newtheorem*{lemsplit}{Lemma~\ref{lem:splittance}}

\iflipics

\else 
\begin{proof}
The algorithm maintains the following data:
\begin{itemize}
    \item the standard representation of the dynamic graph $G$ (i.e., for each vertex $v \in V(G)$, we store a dictionary of all the vertices adjacent to~$v$);
    \item a sorted list $\Lc$ of all the vertex degrees of $G$: $d_1 \geq d_2 \geq \ldots \geq d_n$;
    \item an index $m \in [n]$ equal to $\max \{ i \in [n] \mid d_i \geq i - 1 \}$ and a pointer to an element of $\Lc$ corresponding to $d_m$;
    \item the value of $\splittance(G)$ computed according to \cref{fact:split-degrees};
    \item the sets $A$ and $B$ partitioning $V(G)$ defined as in \cref{fact:split-degrees};
    \item a dictionary $\oper{degree}$ mapping each vertex $v$ of $G$ a pointer to an element of $\Lc$ equal to~$d_G(v)$, where all the pointers are assumed to be pairwise different, and a dictionary $\oper{vertex}\colon \Lc \to V(G)$ storing the inverse mapping $\oper{degree}^{-1}$;
    \item dictionaries $\oper{firstD}, \oper{lastD}$ which, for every $i \in [n]$, store the pointers to the first and the last element of $\Lc$ equal to $i$, respectively (or, null pointers if there are no such elements in $\Lc$).
\end{itemize}
Clearly, given an integer $n$, all the data above can be initialized in time $\Ohtilde(n)$.

Now, consider an update of an edge $e = \{u, v\}$.
Having the graph $G$, we can verify whether $e$ should be inserted or removed in time $\Oh(1)$.
We can easily update the graph $G$ to $G' \coloneqq (V, E \symd \{ e \})$, and update the list $\Lc' \coloneqq (d_1' \geq d_2' \geq \ldots \geq d_n')$ with the use of dictionaries $\oper{degree}$, $\oper{firstD}$ and $\oper{lastD}$.
All these computations can be done in time $\Ohtilde(1)$.

Next, we need to compute the value of $m' \coloneqq \max \{ i \in [n] \mid d'_i \geq i - 1 \}$.
Ibarra (\cite[Section~7]{Ibarra08}) observed that after a single edge update, it holds that $|m' - m| \leq \Oh(1)$.
Hence, we can find the desired index $m'$ in time $\Oh(1)$, by inspecting the $\Oh(1)$-sized neighborhood on $\Lc$ of the list element corresponding to $d_m$.
Consequently, using the formulas from \cref{fact:split-degrees} the value of $\splittance(G')$ can be computed from $\splittance(G)$ in time $\Oh(1)$, and only a constant number of vertices need to be moved between $A$ and $B$ in the new partition $(A', B')$.
Again, it is easy to verify that all the remaining data can be updated in time $\Ohtilde(1)$, and the correctness is guaranteed by~\cref{fact:split-degrees}.
\end{proof}

\fi

\section{Listing (non-)edges across the partition}
\label{sec:edges-across}

Let $G = (A \uplus B, E)$ be a dynamic graph whose vertices are partitioned into two sets $A$ and $B$ which might change over time.
In this section, we devise two auxiliary data structures that are responsible for listing all neighbors of a given $a \in A$ in the set $B$, provided this neighborhood is small, or sampling a sufficient number of neighbors of $a$ in $B$.
Symmetric queries can be also given for non-neighborhoods in $A$ of vertices $b \in B$.

Both data structures work under the promise that $\splittance_G(A, B) \leq k$. This promise will be lifted in the next section, where we design a wrapper data structure that works also when $\splittance_G(A, B) > k$. To facilitate the design of this wrapper data structure, both data structures presented in this section need to be able to process updates in batches: instead of being given only single edge updates, they will be given sets of edge modifications such that after applying all of them the condition $\splittance_G(A, B) \leq k$ is preserved.

Also, both data structures are assumed to have access to sets $\nonEdgesA, \edgesB$ consisting of non-edges with both endpoints in $A$ and edges with both endpoints in $B$, respectively. This access will be provided by the wrapper data structure.

We now give the first data structure, called $\promisenl$ (standing for neighbors listing).

\begin{lemma}\label{lem:listing1}
    There is a data structure $\promisenl[n, k, \ell, d, S]$,
    parametrized by integers $k, \ell, d \in \N$ and a subset
    $S \subseteq V(G)$,
    that runs on a dynamic graph $G = (V, E)$ with $V=[n]$,
     and maintains a partition $(A, B)$ of vertices of $G$
     by supporting
     the following operations:
    \begin{itemize}
        \item $\oper{initialize}(n, k, \ell, d, S)$: sets
        $V \coloneqq [n]$, fixes the values of $k$, $\ell$, $d$
        and $S$ for the entire run, initializes
        $G \coloneqq (V, \emptyset)$ and
        $(A, B) \coloneqq (\emptyset, V)$.
        Runs in time $\Ohtilde(\ell d \cdot n)$.

        \item $\oper{update}(\Vmoved, \Emod, \nonEdgesA, \edgesB)$:
        modifies the edge set of the underlying graph to
        $E' \coloneqq E \symd \Emod$ and modifes the partition to
        $(A',B') \coloneqq (A \symd \Vmoved,B \symd \Vmoved)$, provided that
        after the update $\splittance_{G'}(A', B') \leq k$ for the
        modified graph $G'=(V,E')$.
        Requires two lists of edges
        $\nonEdgesA, \edgesB \subseteq \binom{V}{2}$ storing the sets of
        all non-edges of $G'[A']$ and all edges of $G'[B']$, respectively.
        Note that both sets are of size at most $k$ since
        $\splittance_{G'}(A', B') \leq k$.
        The running time is
        $\Ohtilde(k + \ell d \cdot (|\Vmoved| + |\Emod|))$.
%
%
    \end{itemize}
    Moreover, the data structure can answer the following queries:
    \begin{itemize}
        \item $\oper{listNeighborsBS}(a)$: given a vertex $a \in A$, either returns the set
        \[
            N \coloneqq  N_G(a) \cap B \cap S
        \]
        provided $|N| \leq \ell$, or reports \textsc{TooMany} if $|N| > \ell$.

        \item $\oper{listNonNeighborsAS}(b)$: given a vertex $b \in B$, either returns the set
        \[
            N \coloneqq  \Non_G(b) \cap A \cap S
        \]
        provided $|N| \leq \ell$, or reports \textsc{TooMany} if $|N| > \ell$.
    \end{itemize}
    Both queries run in time $\Ohtilde(k \ell d)$ and return correct answers with probability $1 - \Oh(n^{-d})$.
\end{lemma}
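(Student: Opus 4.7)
The plan is to combine \emph{color coding} with an identifier-sum trick. At initialization, we sample $T := \gamma d \log n$ independent uniform colourings $\chi_1, \ldots, \chi_T \colon V \to [\ell]$ for a large enough constant $\gamma$, fixed once for the whole lifetime. We then maintain, for every $u \in V$, $i \in [T]$, $c \in [\ell]$, the pair
\[
\sigma_{u,i,c} := \sum \{v \in N_G(u) \cap S \colon \chi_i(v) = c\}, \qquad n_{u,i,c} := |\{v \in N_G(u) \cap S \colon \chi_i(v) = c\}|,
\]
together with the partition-side aggregates $\tau^A_{i,c}, m^A_{i,c}, \tau^B_{i,c}, m^B_{i,c}$ restricted to $A \cap S$ and $B \cap S$, and the simple counters $\deg_S(u) := |N_G(u) \cap S|$ and $|A \cap S|, |B \cap S|$. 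An edge flip $uv$ touches $\sigma_{u,i,\chi_i(v)}$ and $n_{u,i,\chi_i(v)}$ only when $v \in S$, and analogously for $u$; moving a vertex $v$ across the partition touches only the $(i, \chi_i(v))$-indexed partition aggregates for $i \in [T]$. Hence each elementary change costs $\Ohtilde(T) = \Ohtilde(d)$, fitting inside the claimed $\Ohtilde(k + \ell d (|\Vmoved| + |\Emod|))$ update bound, with the $k$ summand absorbing the reading of $\nonEdgesA$ and $\edgesB$.

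To answer $\oper{listNeighborsBS}(a)$ with $a \in A$, first compute the total count
\[
C := |N_G(a) \cap B \cap S| = \deg_S(a) - \bigl(|A \cap S| - \indic_{a \in S} - |\Non_G(a) \cap A \cap S|\bigr),
\]
where the last term is obtained in $\Ohtilde(k)$ time by scanning $\nonEdgesA$ (valid because the promise $\splittance_G(A, B) \leq k$ gives $|\Non_G(a) \cap A| \leq k$). If $C > \ell$, report \textsc{TooMany}. Otherwise, for every $(i, c) \in [T] \times [\ell]$, compute
\[
S_{i,c} = \sigma_{a,i,c} - \tau^A_{i,c} + \indic_{a \in S,\, \chi_i(a) = c}\cdot a + \Delta_{i,c},
\]
and the analogous count $C_{i,c}$, where $\Delta_{i,c}$ is the sum of $v \in \Non_G(a) \cap A \cap S$ with $\chi_i(v) = c$. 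Processing all colour-indexed correction contributions together per element of $\nonEdgesA$ gives total query cost $\Ohtilde(T(k + \ell)) = \Ohtilde(k \ell d)$. For every $(i, c)$ with $C_{i, c} = 1$, the value $S_{i, c}$ is precisely the identifier of the unique such neighbour; collect all of these, deduplicate, and return the resulting set. The symmetric query $\oper{listNonNeighborsAS}(b)$ is handled by swapping $A \leftrightarrow B$ and using $\edgesB$ in place of $\nonEdgesA$.

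For correctness, when $|N| \leq \ell$ every vertex $b \in N := N_G(a) \cap B \cap S$ is \emph{exposed} by a fixed $\chi_i$ (i.e., uniquely coloured within $N$) with probability at least $(1 - 1/\ell)^{|N|-1} \geq 1/e$. Independence of the $T = \gamma d \log n$ colourings then bounds the probability that $b$ is never exposed by $(1 - 1/e)^T \leq n^{-(d+1)}$ for sufficiently large $\gamma$; a union bound over all $|N| \leq n$ candidates yields overall failure probability $\Oh(n^{-d})$. The main technical hurdle will be the batched update logic: a single $\oper{update}$ call may simultaneously flip many edges and move many vertices across the partition, and the updates to $\sigma_{u,i,c}, n_{u,i,c}, \tau^A_{i,c}, \ldots$ must be replayed in a consistent order that respects both the $S$-restriction and the evolving side assignments of the vertices involved, while still fitting into the $\Ohtilde(k + \ell d(|\Vmoved| + |\Emod|))$ budget.
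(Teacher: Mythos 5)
Your proposal is correct and follows essentially the same approach as the paper: maintain per-vertex, per-colouring, per-colour identifier sums restricted to $S$, together with partition-side aggregates, and recover exposed neighbours via inclusion-exclusion after deducting the $\Ohtilde(k)$-enumerable non-neighbours within $A$. The only cosmetic difference is that you also maintain per-cell counts and accept a cell only when $C_{i,c}=1$, whereas the paper computes a candidate identifier per cell and verifies it against $N$ directly; both work within the same bounds. The ``technical hurdle'' you flag in the batched update is in fact a non-issue: the edge-indexed counters $\sigma_{u,i,c}, n_{u,i,c}, \deg_S(u)$ depend only on $E$ and $S$ (not on the partition), while $\tau^{A/B}_{i,c}, m^{A/B}_{i,c}, |A\cap S|, |B\cap S|$ depend only on $(A,B)$ and $S$ (not on $E$), so the two families can be updated independently in any order with no consistency concern.
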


\begin{proof}
    Let $\gamma$ be some large enough constant whose value will be set later.
    We assume that $\ell \leq n$, for otherwise we might set $\ell \coloneqq  n$, and this would not affect answers to the queries.
    We also denote the sum of the vertex indices in a set
    $X \subseteq V$ as $\sumset(X)=\sum_{v \in X} v$.
    The data structure $\promisenl[n, k, \ell, d, S]$ stores:
    \begin{itemize}
        \item the adjacency list representation of $G$ and
        representation of $A$ and $B$ as sets;
        \item the integer values of $k, \ell, d$ and $S \subseteq V(G)$ represented as a set;
        \item the sets of edges
         $\nonEdgesA, \edgesB \subseteq \binom{V(G)}{2}$
         provided in the updates, as lists;
        \item for every $i \in [\gamma d \log n]$, a vertex $\ell$-coloring $\chi_i \colon V(G) \to [\ell]$.
        Define $S_{i, c} \coloneqq  S \cap \chi_i^{-1}(c)$, for every $i \in [\gamma d \log n]$ and $c \in [\ell]$; these sets are also stored in the data structure;
        \item the following integer values:
        \begin{align*}
            \oper{countS}(v) & = |N_G(v) \cap S|\quad \text{for every } v \in V(G);\\
            \oper{countAS} & = |A \cap S|; \qquad \oper{countBS} = |B \cap S|; \\
            \oper{idSumS}_{i, c}(v) & = \sumset( N_G(v) \cap S_{i, c} )\quad \text{for every } i \in [\gamma d \log n], c \in [\ell] \text{ and } v \in V(G); \\
            \oper{idSumAS}_{i,c} & = \sumset( A \cap S_{i, c} )\quad \text{for every } i \in [\gamma d \log n] \text{ and } c \in [\ell];\\
            \oper{idSumBS}_{i,c} & = \sumset( B \cap S_{i, c} ) \quad \text{for every } i \in [\gamma d \log n] \text{ and } c \in [\ell].
        \end{align*}
    \end{itemize}

    Before proceeding with the description of operations, let us give a brief idea how these variables are going to be used.
    Consider a vertex $a \in A$.
    Using the variables $\oper{count}$- and the set $\nonEdgesA$, one can obtain the number of neighbors of $a$ in the set $B \cap S$ by applying a simple inclusion-exclusion formula.
    If this neighborhood $N$ is  of size at most~$\ell$, we would like to list all the vertices of $N$.
    To this end, we use the vertex colorings $\chi_i$ and the corresponding variables $\oper{idSum}$-.
    Each coloring $\chi_i$ will be initialized randomly.
    Therefore, with high probability, for every vertex $v \in N$, there will be a coloring $\chi_i$ such that under~$\chi_i$, $v$ has a different color than all the other vertices of $N \setminus \{v\}$.
    By applying a similar inclusion-exclusion principle on the variables $\oper{idSum}$- we will be able to retrieve an identifier of $v$.
    
    \subparagraph*{Initialization.}
    \begin{itemize}
        \item We initialize the adjacency representation of $G$
            to store an edgless graph on vertices~$[n]$.
            We initialize $A=\emptyset$ and $B=V$.
            Also, we set the lists
            $\nonEdgesA,\edgesB$ to be empty.
        \item We store the values of $k, \ell, d$ and $S$ as a set.
        \item All colorings $\chi_i$ are initialized randomly, that is, for every $i \in [\gamma d \log n]$ and $v \in V(G)$, we choose a color for $\chi_i(v)$ uniformly at random from the set $[\ell]$.
        The colorings are fixed for the entire run as well, and so are sets $S_{i,c}$.
        \item Since the initial graph is empty, all the values $\oper{count}$- and $\oper{idSum}$- can be computed according to their definitions in total time $\Ohtilde(\ell d \cdot n)$.
    \end{itemize}
    
    \subparagraph*{Update.}
    Consider an update $\oper{update}(\Vmoved, \Emod, \nonEdgesA, \edgesB)$.
    First, we save the new values of $\nonEdgesA$ and $\edgesB$.
    Observe that the functions $\oper{countS}(\cdot)$ and $\oper{idSumS}_{i, c}(\cdot)$ need to be changed only for the endpoints of the edges of $\Emod$, and the remaining $\Ohtilde(\ell  d)$ stored variables are affected only by vertices moved between $A$ and $B$.
    Therefore, the entire update takes $\Ohtilde(k + \ell d \cdot (|\Vmoved| + |\Emod|))$ time.
    
    \subparagraph*{Queries.}
    We show how to answer the query $\oper{listNeighborsBS}(a,\ \nonEdgesA)$
    for a vertex $a \in A$.
    Recall that since $\splittance_G(A, B) \leq k$, there are at most $k$ non-edges within $A$, i.e., $|\nonEdgesA| \leq k$.
    Let $N \coloneqq  N_G(a) \cap B \cap S$.
    We start with the following observation.
    
    \begin{claim}\label{claim:sdegree}
        \claimsdegreetext 
    \end{claim}
    \begin{claimproof}
    The following equalities hold:
\begin{align*}
    |N| = |N_G(a) \cap B \cap S| & = |N_G(a) \cap S| - |N_G(a) \cap A \cap S| = \oper{countS}(a) - |N_G(a) \cap A \cap S| \\
    & = \oper{countS}(a) - (|A \cap S| - \indic_{a \in S} - |\Non_G(a) \cap A \cap S|) \\
    & = \oper{countS}(a) - (\oper{countAS} - \indic_{a \in S} - |\Non_G(a) \cap A \cap S|).
\end{align*}
Observe that $\Non_G(a) \cap A \cap S \subseteq \Non_G(a) \cap A$, and we can obtain all the non-neighborhood of $a$ in $A$ by iterating over the given set $\nonEdgesA$ of size at most $k$.
Thus, we can compute the value of $|\Non_G(a) \cap A \cap S|$ in time $\Ohtilde(k)$.
    \end{claimproof}

    Now, as a first step to answer
    $\oper{listNeighborsBS}(a,\ \nonEdgesA)$ query, we compute $|N|$.
    If $|N| > \ell$, we report \textsc{TooMany}.
    So assume that $|N| \leq \ell$.
    For a vertex $b \in N$ and a coloring $\chi_i$,
    let us say that $b$ is \emph{exposed} by the coloring $\chi_i$
    if $\chi_i(b) \neq \chi_i(b')$ for every $b' \in N \setminus \{b\}$. First, we note that every member of $N$ is exposed with high probability.

    \iflipics
    \begin{claim}[$\app$]\label{claim:color}
    \else
    \begin{claim}\label{claim:color}
    \fi
        \claimcolortext 
    \end{claim}
    \newtheorem*{claimcolor}{\cref{claim:color}}
    
    \iflipics
    \else
    \begin{claimproof}
If $\ell = 1$, the claim is trivial, so assume that $\ell \geq 2$.
After fixing a vertex $b \in N$ and a coloring $\chi_i$, we obtain that
\[
\Pr\left( \text{$b$ is exposed by $\chi_i$} \right) = \frac{\ell \cdot (\ell - 1)^{|N|-1}}{\ell^{|N|}} = \left( \frac{\ell - 1}{\ell} \right)^{|N|-1} \geq  \left( \frac{\ell - 1}{\ell} \right)^{\ell-1} > \frac{1}{10e}.
\]
Therefore, by applying the union bound we can upper bound the
probability of the event complementary to the one in the claim as follows:
\begin{align*}
&\Pr \left( \exists_{b \in N}\  \forall_{i \in [\gamma d \log n]}\ \text{ $b$ is not exposed by $\chi_i$} \right)
\leq \sum_{b \in N} \Pr \left(\forall_{i \in [\gamma d \log n]}\ \text{ $b$ is not exposed by $\chi_i$} \right) \\
& \leq \sum_{b \in N} \left( 1 - \frac{1}{10e} \right)^{\gamma d \log n}
\leq n \cdot \left( 1 - \frac{1}{10e} \right)^{\gamma d \log n}
\leq \frac{1}{n^d},
\end{align*}
where the last inequality holds for all large enough values of $\gamma$.
    \end{claimproof}
    \fi
    
    Next, we note that once a vertex of $N$ is exposed, its identifier can be easily computed using a similar inclusion-exclusion principle as in the proof of \cref{claim:sdegree}.

    \iflipics
    \begin{claim}[$\app$]\label{claim:retrieve}
    \else
    \begin{claim}\label{claim:retrieve}
    \fi
        \claimretrievetext 
    \end{claim}
    \newtheorem*{claimretrieve}{\cref{claim:retrieve}}
    
    \iflipics
    \else
    \begin{claimproof}
    Since $b$ is exposed by $\chi_i$, we obtain that
\begin{align*}
    b = \sumset( N \cap \chi^{-1}(c) ) =
    \sumset( N_G(a) \cap B \cap S \cap \chi^{-1}(c) ) =
    \sumset( N_G(a) \cap B \cap S_{i,c} ).
\end{align*}
Now, we apply a similar inclusion-exclusion formula as in the proof of \cref{claim:sdegree}:
\begin{align*}
    \sumset( N_G(a) &\cap B \cap S_{i,c} ) = \sumset( N_G(a) \cap S_{i,c} ) - \sumset( N_G(a) \cap A \cap S_{i,c} ) \\
    & = \oper{idSumS}_{i,c}(a) - \sumset( N_G(a) \cap A \cap S_{i,c} ) \\
    & = \oper{idSumS}_{i,c}(a) - \left( \sumset( A \cap S_{i,c}) - a \cdot \indic_{a \in S_{i,c}}  - \sumset( \Non_G(a) \cap A \cap S_{i,c} )  \right) \\
    & = \oper{idSumS}_{i,c}(a) - \left( \oper{idSumAS}_{i,c} - a \cdot \indic_{a \in S_{i,c}} - \sumset( \Non_G(a) \cap A \cap S_{i,c} ) \right).
\end{align*}
Again, we observe that $\Non_G(a) \cap A \cap S_{i,c} \subseteq \Non_G(a) \cap A$, and the latter set can be easily enumerated using the set $\nonEdgesA$ in time $\Ohtilde(k)$.
    \end{claimproof}
    \fi
    
    We already have all the tools to finish the implementation of the query $\oper{listNeighborsBS}(a)$.
    To list all the elements of $N$,
    it is enough to iterate over all indices
    $i \in [\gamma d \log n]$ and colors $c \in [\ell]$,
    and for each pair $(i, c)$ retrieve the
    value that is described in \cref{claim:retrieve}
    and check whether the returned value is a vertex that belongs to
    $N$.
    By \cref{claim:color}, this procedure will retrieve the entire
    set $N$ with probability at least $1 - n^{-d}$ and the total
    running time is bounded by $\Ohtilde(k \ell d)$.
    
    To answer the queries $\oper{listNonNeighborsAS}(b,\ \edgesB)$ for $b \in B$ we proceed analogously.
    To this end, we need additionally the values of:
    \begin{itemize}
        \item $|\Non_G(b) \cap S|$ which can be obtained from the values of $|N_G(b) \cap S|$ and $|S|$; and
        \item $\sumset( \Non_G(b) \cap S_{i, c} ) $ which we can
        obtain from $\sumset( N_G(b) \cap S_{i, c} )$ and
        $\sumset( S_{i, c} ) $, where the latter sums can be precomputed during the initialization.\qedhere
    \end{itemize}
\end{proof}

We note that the application of color coding presented in \cref{claim:color} and \cref{claim:retrieve} can be derandomized in a standard way, by replacing sampling the sets $S_{i,c}$ with drawing them from a suitable universal hash family; see e.g.~\cite[Section~5.6]{platypus}. However, as we will use randomization again in our second data structure (\cref{lem:sampling} below), we prefer to present color coding here in a randomized, and therefore simpler manner.

So far, we can list $\ell$ neighbors of a vertex $a \in A$ in the set $B \cap S$ only under an assumption that this neighborhood $N$ is small.
However, this condition does not need to hold in general, and in
the next sections we will also need to find several neighbors of $a$ in $B$ (more precisely, a subset of $N$ of size $\Theta(k)$) in the case when the set $N$ is large.
This is the responsibility of our next data structure, named $\promisens$ (standing for neighbors sampling).
Roughly speaking, this data structure will store a set of layers
of the vertex set, where the $i$th layer contains
roughly $n / 2^i$ random vertices of $G$.
Then, provided $|N| > \ell$,
with high probability one of the layers will contain $\Theta(\ell)$ vertices of $N$. By running an instance of $\promisenl$ on each layer, we will be able to retrieve those vertices.

\begin{lemma}\label{lem:sampling}
    There exists a data structure $\promisens[n, k, \ell, d]$,
    parametrized by integers $k, \ell, d \in \N$,
    that runs on a dynamic graph $G = (V, E)$ with $V=[n]$,
     and maintains a dynamic partition $(A, B)$ of vertices of $G$
     by supporting
     the following operations:
    \begin{itemize}
        \item $\oper{initialize}(n, k, \ell, d)$: sets
        $V \coloneqq [n]$, fixes the values of $k$, $\ell$ and $d$
        for the entire run, initializes
        $G \coloneqq (V, \emptyset)$ and
        $(A, B) \coloneqq (\emptyset, V)$.
        Runs in time $\Ohtilde(\ell d^2 \cdot n)$.

        \item $\oper{update}(\Vmoved, \Emod, \nonEdgesA, \edgesB)$:
        modifies the edge set of the underlying graph to
        $E' \coloneqq E \symd \Emod$ and modifes the partition to
        $(A',B') \coloneqq (A \symd \Vmoved,B \symd \Vmoved)$, provided that
        after the update $\splittance_{G'}(A', B') \leq k$ for the
        modified graph $G'=(V,E')$.
        Requires two lists of edges
        $\nonEdgesA, \edgesB \subseteq \binom{V}{2}$ storing the sets of
        all non-edges of $G'[A']$ and all edges of $G'[B']$, respectively.
        Note that both sets are of size at most $k$ since
        $\splittance_{G'}(A', B') \leq k$.
        The running time is
        $\Ohtilde(kd + \ell d^2 \cdot (|\Vmoved| + |\Emod|))$.

    \end{itemize}
    Moreover, the data structure can answer the following queries:
    \begin{itemize}
        \item $\oper{sampleEdges}(a)$: given a vertex $a \in A$, returns a subset
        \[
            \NBsample(a) \subseteq N_G(a) \cap B
        \]
        of size $\min(\ell, |N_G(a) \cap B|)$;
        \item $\oper{sampleNonEdges}(b)$: given a vertex $b \in B$, returns a subset
        \[
            \NAsample(b) \subseteq \Non_G(b) \cap A
        \]
        of size $\min(\ell, |\Non_G(b) \cap A|)$.
    \end{itemize}
    Both queries run in time $\Ohtilde(k\ell d^2)$ and
    return correct answers with probability
    $1 - \Oh(n^{-d})$.
\end{lemma}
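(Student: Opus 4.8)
The plan is to implement $\promisens[n,k,\ell,d]$ on top of $\Oh(d\log^2 n)$ instances of the $\promisenl$ data structure from \cref{lem:listing1}, run on randomly sampled \emph{layers} of the vertex set. Concretely, I would fix a large enough constant $\gamma$ and, upon initialization, for every level $i\in\{0,1,\ldots,\lceil\log n\rceil\}$ and every repetition $j\in[\gamma d\log n]$ sample a set $S_{i,j}\subseteq V$: put $S_{0,j}\coloneqq V$, and for $i\ge 1$ include each vertex of $V$ in $S_{i,j}$ independently with probability $2^{-i}$. These sets are drawn once and fixed for the entire run. For each pair $(i,j)$ I would create and maintain an instance $\promisenl[n,k,6\ell,2d,S_{i,j}]$, and additionally store the adjacency list representation of $G$, the sets $A,B$, and the lists $\nonEdgesA,\edgesB$ supplied in updates. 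Sampling the layers costs $\Ohtilde(n\cdot d\log^2 n)$, and initializing the $\Oh(d\log^2 n)$ instances of $\promisenl$ costs $\Oh(d\log^2 n)\cdot\Ohtilde(\ell d\cdot n)=\Ohtilde(\ell d^2\cdot n)$, as required. An $\oper{update}(\Vmoved,\Emod,\nonEdgesA,\edgesB)$ is handled by updating the stored data and forwarding the update (with the new $\nonEdgesA,\edgesB$) to every instance; the precondition $\splittance_{G'}(A',B')\le k$ is inherited, so each of the $\Oh(d\log^2 n)$ calls is legal and costs $\Ohtilde(k+\ell d\cdot(|\Vmoved|+|\Emod|))$, for a total of $\Ohtilde(kd+\ell d^2\cdot(|\Vmoved|+|\Emod|))$.

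For a query $\oper{sampleEdges}(a)$ with $a\in A$, write $M\coloneqq N_G(a)\cap B$ and $m\coloneqq|M|$. I would first query $\promisenl_{0,1}$ via $\oper{listNeighborsBS}(a)$; since $S_{0,1}=V$ this returns all of $M$ when $m\le 6\ell$, in which case I output an arbitrary subset of $M$ of size $\min(\ell,m)$ and stop. Otherwise it reports \textsc{TooMany}, so $m>6\ell$ and it suffices to output any $\ell$ vertices of $M$; I would then scan the pairs $(i,j)$ with $i\ge 1$, calling $\oper{listNeighborsBS}(a)$ on $\promisenl_{i,j}$, and as soon as some call returns a set $R$ (not \textsc{TooMany}) with $|R|\ge\ell$, output an arbitrary $\ell$-element subset of $R\subseteq M$. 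The query $\oper{sampleNonEdges}(b)$ is symmetric, using $\oper{listNonNeighborsAS}$. Each query makes $\Oh(d\log^2 n)$ calls to $\promisenl$, each costing $\Ohtilde(k\ell d)$, so the running time is $\Ohtilde(k\ell d^2)$.

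It remains to argue that when $m>6\ell$ the scan succeeds with probability $1-\Oh(n^{-d})$. Let $i^\star$ be the smallest index with $m2^{-i^\star}\le 3\ell$; then $1\le i^\star\le\lceil\log n\rceil$ and $\mu\coloneqq\mathbb{E}\,|M\cap S_{i^\star,j}|=m2^{-i^\star}\in(3\ell/2,3\ell]$ for every $j$. I would invoke a standard binomial concentration estimate --- Chernoff bounds once $\ell$ exceeds an absolute constant, and a direct computation for the finitely many smaller values of $\ell$ --- to obtain a universal constant $c_0>0$ with $\Pr(\ell\le|M\cap S_{i^\star,j}|\le 6\ell)\ge c_0$, independently over $j$. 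Then
\[
\Pr\Big(|M\cap S_{i^\star,j}|\notin[\ell,6\ell]\ \text{for all}\ j\in[\gamma d\log n]\Big)\ \le\ (1-c_0)^{\gamma d\log n}\ \le\ n^{-2d}
\]
for $\gamma$ large enough. On the complementary event some $(i^\star,j)$ satisfies $\ell\le|M\cap S_{i^\star,j}|\le 6\ell$, and then --- conditioned on that $\promisenl$ instance answering correctly, which fails with probability $\Oh(n^{-2d})$ --- the call $\oper{listNeighborsBS}(a)$ on $\promisenl_{i^\star,j}$ returns exactly $M\cap S_{i^\star,j}$ (its size does not exceed the parameter $6\ell$, so \textsc{TooMany} is not triggered), yielding at least $\ell$ vertices of $M$. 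A union bound over this event, over the at most $\Oh(d\log^2 n)$ $\promisenl$ answers actually consulted, and over $\promisenl_{0,1}$, bounds the total failure probability by $\Oh(d\log^2 n\cdot n^{-2d})=\Oh(n^{-d})$.

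The bulk of the argument is routine forwarding of updates and the counting of running times; the only genuinely delicate step is the ``some layer captures $\Theta(\ell)$ elements of $M$'' claim. There I expect the main nuisance to be handling small values of $\ell$ uniformly with the large-$\ell$ case (Chernoff bounds are vacuous for $\ell=\Oh(1)$), and making sure the accumulated error stays at $\Oh(n^{-d})$ despite querying polylogarithmically many sub-structures --- which is precisely why I would initialize each $\promisenl$ with the inflated accuracy parameter $2d$.
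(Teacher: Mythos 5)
Your proposal is correct and follows essentially the same route as the paper: you build $\promisens$ on top of polylogarithmically many $\promisenl$ instances, each on a geometrically subsampled layer $S_{i,j}$ (vertex kept with probability $2^{-i}$), and for a large neighborhood $M$ argue that some layer at the right level captures between $\ell$ and $\Oh(\ell)$ elements of $M$ with high probability, then retrieve them via $\oper{listNeighborsBS}$. The only material difference is bookkeeping: the paper picks the level so that the expected intersection is in $[24\ell, 48\ell]$, which lets a single multiplicative Chernoff bound give per-trial failure probability $\leq 2^{-\ell}$ and, by independence over $j$, overall failure $n^{-d\ell}$ with no need to inflate the accuracy parameter or introduce a $\gamma$; you instead target a mean in $(3\ell/2, 3\ell]$ and settle for a universal constant $c_0>0$, which is correct but is the one step you rightly flag as delicate, since for $\ell=\Oh(1)$ you must bound $\Pr(X<\ell)$ and $\Pr(X>6\ell)$ uniformly over all valid $(m,p)$ (e.g.\ via $\Pr(X=0)\le e^{-\mu}$ and a Markov/Chernoff upper-tail bound) rather than by a literal finite computation.
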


\begin{proof}
    The data structure $\promisens[n, k, \ell, d]$ stores:
    \begin{itemize}
        \item the adjacency lists representation of $G$
        and representation of $A$ and $B$ as sets;
        \item the values of the parameters $k$, $\ell$, and $d$;
        \item the sets
          $\nonEdgesA, \edgesB \subseteq \binom{V(G)}{2}$
          given in the updates, as lists;
        \item a single instance $\promisenl[n, k, \ell', d, S]$ of the data structure from \cref{lem:listing1} running on the same dynamic graph~$G$, where $\ell' \coloneqq 72\ell$ and $S \coloneqq V(G)$;
        \item for every $i \in [\log n]$ and $j \in [d \log n]$, a subset of vertices $V_{i, j} \subseteq V(G)$, fixed upon initialization and not modified later on; and
        \item for every $i \in [\log n]$ and $j \in [d \log n]$, an instance $\promisenl_{i,j}[n, k, \ell', d, S_{i, j}]$ of the data structure from \cref{lem:listing1} running on the same dynamic graph~$G$, where $S_{i,j} \coloneqq V_{i, j}$.
    \end{itemize}

    \subparagraph*{Initialization.}
    Intuitively, we want the set $V_{i, j}$ to be a random subset containing roughly $n/2^i$ vertices of $G$.
    To achieve this, for every $i \in [\log n]$ and $j \in [d \log n]$, and every vertex $v \in V(G)$, we put $v$ into the set $V_{i, j}$ with probability $2^{-i}$, independently.
    Then, we initialize all the data structures $\promisenl$ and $\promisenl_{i,j}$.
    The total running time of this phase is bounded by
    \[ \log n \cdot d \log n \cdot \Ohtilde(\ell' d \cdot n) \subseteq \Ohtilde(\ell d^2 \cdot n).\]

    \subparagraph*{Update.}
    Given an update with arguments $(\Vmoved, \Emod, \nonEdgesA, \edgesB)$, we simply pass it to all the stored data structures.
    The running time is thus bounded by
    \[
        \log n \cdot d \log n \cdot \Ohtilde(k + \ell' d \cdot (|\Vmoved| + |\Emod|)) \subseteq \Ohtilde(kd + \ell d^2 \cdot (|\Vmoved| + |\Emod|)).
    \]

    \subparagraph*{Sampling.}
    Fix a vertex $a \in A$.
    We show how to answer the query $\oper{sampleEdges}(a)$.
    The implementation of queries $\oper{sampleNonEdges}(b)$, for $b \in B$, is symmetric.
    Let $N \coloneqq N_G(a) \cap B$.
    Our goal is to sample a subset of the set $N$ of size $\min(\ell, |N|)$.

    First, consider the case when $|N| \leq \ell'$.
    Then, by querying the structure $\promisenl$ about the list of
    neighbors of $a$ in the set $B \cap V(G) = B$
    (i.e., we run $\oper{listNeighborsBS}(a)$), we obtain all the
    vertices of $N$ with probability at least $1 - \Oh(n^{-d})$.
    Otherwise, the call returns $\textsc{TooMany}$, and we can proceed as follows.

    From now on, assume that $|N| > \ell'=72\ell$.
    Then, we query all the data structures $\promisenl_{i, j}$ about the list of neighbors of $a$ in the corresponding sets $B \cap V_{i, j}$ (i.e., we run $\oper{listNeighborsBS}(a)$ on each $\promisenl_{i, j}$).
    We prove that with probability
    at least $1 - \Oh(n^{-d})$ one of such queries will correctly return at least $\ell$ elements of $N$.
    Define $N_{i, j} \coloneqq N \cap V_{i, j}$, for every $i \in [\log n]$, $j \in [d \log n]$.

    \iflipics
    \begin{claim}[$\app$]\label{claim:smallN}
    \else
    \begin{claim}\label{claim:smallN}
    \fi
        \claimsmallNtext 
    \end{claim}
    \newtheorem*{claimsmallN}{\cref{claim:smallN}}
    
    \iflipics
    \else
    \begin{claimproof}
    Recall that $|N| > 72\ell$ and each set of the form $N, N_{1, j}$, $N_{2, j}$, $N_{3, j}, \ldots,$ (for every $j \in [d \log n]$) contains half as many vertices in expectation as the previous one. Hence, there exists an index $i \in [\log n]$ such that
\[
    \E |N_{i, j}| \in [24\ell, 48\ell] \quad \text{ for every } j \in [d \log n].
\]
Fix such an index $i$ and let $\mu \coloneqq \E |N_{i, j}|=n/2^i$.
Observe that for every $j$, $|N_{i, j}|$ is the sum of $n$
random variables $X_v$ taking values $0$ or $1$, indicating
whether a vertex $v$ belongs to $N_G(a) \cap B \cap V_{i,j}$.
Hence, we can apply the following Chernoff's inequality:
\[
    \Pr \left( |N_{i, j}| \not\in \left(\frac{\mu}{2}, \frac{3\mu}{2} \right) \right) \leq 2e^{-\mu/12} \quad \text{ for every } j \in [d \log n].
\]
By plugging in the bounds on $\mu$ (i.e., $\mu \in [24\ell, 48\ell]$), we obtain that
\[
    \Pr \left(|N_{i, j}| \not\in (12\ell, 72\ell) \right) \leq 2e^{-24\ell / 12} < 2 \cdot 2^{-2\ell} < 2^{-\ell} \quad \text{ for every } j \in [d \log n].
\]
Finally, since the random variables $|N_{i, j}|$ are
independent over the choices of $j \in [d \log n]$, it holds that
\[
    \Pr \left(|N_{i, j}| \not\in (12\ell, 72\ell) \text{ for every } j \in [d \log n] \right) \leq 2^{-\ell \cdot d \log n} = n^{-d\ell}.
\]
This finishes the proof of the claim.
    \end{claimproof}
    \fi

    Take an index $i \in [\log n]$ satisfying \cref{claim:smallN}.
    We know that with probability at least
    $1 - n^{-d\ell}$ there exists $j \in [d \log n]$ such that the
    set $N_{i, j}$ contains more than $12\ell > \ell$ but less than $72\ell = \ell'$ vertices.
    The corresponding query to the data structure $\promisenl_{i, j}$ will return at least $\ell$ elements of $N_{i, j} \subseteq N$ with probability at least $1 - \Oh(n^{-d})$.
    Hence, by the union bound, the probability of incorrectly answering the query $\oper{sampleEdges}(a)$ is upper bounded by $n^{-d\ell} + \Oh(n^{-d}) \subseteq \Oh(n^{-d})$.
    The running time is dominated by the total running time of queries on all structures $\promisenl_{i, j}$ which is bounded by
    $\log n \cdot d \log n \cdot \Ohtilde(k\ell' \cdot d) \subseteq \Ohtilde(k\ell \cdot d^2)$.
\end{proof}


\section{Lifting the promise}
\label{sec:edges-inside}

In the previous section we presented data structures that work under the promise that $\splittance_G(A, B) \leq k$.
Now, we focus on the general case when no such condition is guaranteed to hold.
The main responsibility of our next data structure, $\nsample[n, k, d]$, is to implement the same functionality as data structures $\promisenl$ and $\promisens$, but to work without the assumption that $\splittance_G(A, B) \leq k$ and, additionally, to offer methods for listing non-edges with both endpoints in $A$ and edges with both endpoints in $B$ (these were needed for queries in the data structures $\promisenl$ and $\promisens$).
All those methods are required to give correct outputs only when the inequality $\splittance_G(A,B) \leq k$ holds, but the data structure should persist also when $\splittance_G(A,B)>k$, and return to giving correct answers again once the splittance decreases to at most $k$.



\begin{lemma}\label{lem:listing2}
    There exists a data structure $\nsample[n, k, d]$ parametrized by integers
    $k, d \in \N$ that runs on a
    dynamic graph $G = (V, E)$ with $V=[n]$ and
    maintains a partition $(A, B)$ of $V$ satisfying
    $\splittance_G(A, B) = \splittance(G)$ by supporting the
    following operations:
    \begin{itemize}
        \item $\oper{initialize}(n, k, d)$: sets $V \coloneqq [n]$,
        fixes the values of
        $k$ and $d$ for the entire run, initalizes
        $G \coloneqq (V, \emptyset)$ and
        $(A, B) \coloneqq (\emptyset, V)$.
        Runs in time $\Ohtilde(kd^2 \cdot n)$.
        \item $\oper{update}(uv)$:
        inserts edge $uv$ if $uv \not\in E(G)$
        or removes edge $uv$ if $uv \in E(G)$. Runs
        in $\Ohtilde(kd(k + d))$ amortized time.
        \item $\oper{splittance}()$:
        returns $\splittance(G)$ in time $\Oh(1)$.
    \end{itemize}
    Moreover, if  $\splittance(G) \leq k$,
    the data structure answers the following queries:
    \begin{itemize}
        \item $\oper{listNonEdgesA}()$: returns the list of all non-edges of $G$ with both endpoints in $A$;
        \item $\oper{listEdgesB}()$: returns the list of all edges of $G$ with both endpoints in $B$;
        \item $\oper{sampleEdges}(a)$: given a vertex $a \in A$, returns a subset
        $
            \NBsample(a) \subseteq N_G(a) \cap B
        $
        of size $\min(10k, |N_G(a) \cap B|)$;
        \item $\oper{sampleNonEdges}(b)$: given a vertex $b \in B$, returns a subset
        $
            \NAsample(b) \subseteq \Non_G(b) \cap A
        $
        of size $\min(10k, |\Non_G(b) \cap A|)$.
    \end{itemize}
    The queries $\oper{listNonEdgesA}$ and $\oper{listEdgesB}$ run in time $\Ohtilde(k)$, and the queries $\oper{sampleEdges}$ and $\oper{sampleNonEdges}$ run in time $\Ohtilde(k^2d^2)$.
    The data structure is randomized and returns correct answers
    with probability $1 - \Oh(n^{-d})$.
\end{lemma}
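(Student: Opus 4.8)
The plan is to run, on the same dynamic graph $G$, three cooperating structures: one copy of $\dsplit[n]$ from \cref{lem:splittance}, which at every step hands us the optimum partition $(A,B)$, the value $\splittance(G)$, and a set $\Vmoved$ of $\Oh(1)$ vertices that switch sides at the current update; one instance $\promisens[n,k,10k,d]$ from \cref{lem:sampling}; and one auxiliary instance $\promisenl[n,k,10k,d,V(G)]$ from \cref{lem:listing1}. In addition we keep the two lists $\nonEdgesA,\edgesB$ and a queue of deferred updates. The target invariant is: \emph{whenever $\splittance(G)\le k$, the lists $\nonEdgesA,\edgesB$ hold exactly the non-edges inside $A$ and the edges inside $B$, they have total size at most $k$, and the latest values have been pushed into $\promisens$ and into the auxiliary $\promisenl$.} Then $\oper{splittance}()$ just relays to $\dsplit$, the queries $\oper{listNonEdgesA}()$ and $\oper{listEdgesB}()$ copy out the two lists in time $\Ohtilde(k)$, and $\oper{sampleEdges},\oper{sampleNonEdges}$ are relayed to $\promisens$, inheriting its $\Ohtilde(k^2d^2)$ query time and its $1-\Oh(n^{-d})$ correctness. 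The fact we shall lean on repeatedly is that, as visible from the proofs of \cref{lem:listing1,lem:sampling}, the update procedure of $\promisenl$ (hence of $\promisens$) adjusts its stored counters purely from $\Emod$ and $\Vmoved$ and merely \emph{stores} the lists $\nonEdgesA,\edgesB$ without reading them or using the promise $\splittance\le k$; hence we may feed $\promisens$ and the auxiliary $\promisenl$ \emph{every} edge update, with arbitrary dummy lists, so that their counters remain correct at all times irrespective of the current splittance, at a cost of $\Ohtilde(kd^2)$ per update.

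\textbf{Updates that keep the splittance at most $k$.} Suppose $\splittance(G)\le k$ both before and after the update of an edge $uv$. After applying $uv$ to the graph copy and to $\dsplit$, getting $\Vmoved$, I would update $\nonEdgesA,\edgesB$ by accounting for two effects: toggling $uv$ changes the lists by $\Oh(1)$ entries, and each $v\in\Vmoved$ that moves from $A$ to $B$ (the reverse case being symmetric) forces us to drop from $\nonEdgesA$ the $\le k$ pairs incident to $v$ --- these are $\Non_G(v)\cap A$, which we read off the \emph{current} $\nonEdgesA$ --- and to insert into $\edgesB$ the set $N_G(v)\cap B$, which has size $\le k$ because the splittance stays $\le k$; this last set we recover by one call $\oper{listNeighborsBS}(v)$ to the auxiliary $\promisenl$, issued \emph{before} the move is applied there (so that $v$ is still in $A$), which cannot overflow and costs $\Ohtilde(k^2d)$, after which $\Oh(1)$ corrections handle the toggled edge and the other moved vertices. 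Finally we push the corrected lists into $\promisens$ and the auxiliary $\promisenl$. The total cost is $\Ohtilde(k^2d+kd^2)=\Ohtilde(kd(k+d))$ in the worst case.

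\textbf{Deferral and the batch flush.} If an update makes, or leaves, $\splittance(G)>k$, we touch neither $\nonEdgesA,\edgesB$ (they could be too large) nor the queries; we only record the update on the queue and feed it to $\promisens$ and the auxiliary $\promisenl$ with dummy lists, spending $\Ohtilde(kd^2)$. Let $G_g$, $(A_g,B_g)$, $\nonEdgesA^g$, $\edgesB^g$ denote the state at the last moment the splittance was $\le k$, so $|\nonEdgesA^g|+|\edgesB^g|\le k$. When, after $t$ deferred updates, the splittance drops back to $\le k$, we \emph{flush}: we recompute $\nonEdgesA,\edgesB$ for the current optimum partition $(A,B)$, push them into $\promisens$ and into the auxiliary $\promisenl$, and clear the queue. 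For the amortized bound we assign a potential equal to $\Theta(kd(k+d))$ times the queue length; then it suffices that the flush runs in time $\Ohtilde(t\cdot kd(k+d))$. The recompute starts with a localization step: let $W$ collect the endpoints of the $\le t$ toggled edges, the $\Oh(t)$ vertices $\dsplit$ moved during the period, and the endpoints of the pairs in $\nonEdgesA^g\cup\edgesB^g$, so that $|W|=\Oh(t+k)$; then every non-edge inside $A$ and every edge inside $B$ of the current graph has an endpoint in $W$, because a pair avoiding $W$ lies on the same sides as at $G_g$ with unchanged adjacency, hence is already listed in $\nonEdgesA^g$ or $\edgesB^g$. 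It then remains, for each $w\in W$, to produce $\Non_G(w)\cap A$ if $w\in A$ and $N_G(w)\cap B$ if $w\in B$; each such set has size $\le k$ whenever it is nonempty.

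\textbf{The hard part.} I expect the heart of the proof to be this last recovery step, and I would attack it with two ingredients. First, the optimality of $(A,B)$ forces the degree separation $d_G(a)\ge|A|-1$ for $a\in A$ and $d_G(b)\le|A|$ for $b\in B$; consequently a vertex $b\in B$ that is incident to an edge of $G[B]$ either has small degree, in which case we just scan its adjacency list, or has a small non-neighbourhood in $A$, in which case $N_G(b)\cap B$ is obtained by color coding --- reading $\sumset(N_G(b)\cap\chi_i^{-1}(c))$ and $\sumset(A\cap\chi_i^{-1}(c))$ off the always-current counters of the auxiliary $\promisenl$ and filling in the only missing term $\sumset(\Non_G(b)\cap A\cap\chi_i^{-1}(c))$ by enumerating that small set --- and symmetrically for $w\in A$. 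Second, for those vertices of $W$ that actually changed sides since $G_g$, of which there are $\Oh(t)$, I would reconstruct their relevant within-side (non-)neighbourhoods incrementally from the recorded $\nonEdgesA^g,\edgesB^g$ and the queue, exploiting that these neighbourhoods differ from their $G_g$-values only through the $\Oh(t)$ modifications made during the period and are moreover capped since $\splittance(G)\le k$ holds again. Making this reconstruction fit the $\Ohtilde(t\cdot kd(k+d))$ budget --- rather than a naive $\Ohtilde(t^2)$ --- and making every color-coding decode succeed with probability $1-\Oh(n^{-d})$ by recycling the colorings already present in the $\promisenl$ instances and union-bounding over the $\Oh(t+k)$ vertices and $\Oh(\log n)$ colorings, is the delicate point. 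Modulo this step, correctness of all queries follows from the invariant and from \cref{lem:sampling}; the overall failure probability is $\Oh(n^{-d})$ after inflating the internal accuracy parameter by a constant, since only polynomially many color-coding decodes occur over the whole run; and the claimed initialization time $\Ohtilde(kd^2\cdot n)$ is dominated by initializing the $\promisens$ instance.
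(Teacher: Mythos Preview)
Your overall architecture --- $\dsplit$ plus a $\promisens$ instance, an auxiliary $\promisenl$, the two lists, and a deferral queue --- matches the paper's, and you correctly identify the batch flush as the heart of the argument. However, your proposed attack on the flush contains a genuine gap, and the paper's mechanism there is different in an essential way.

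The gap is in your ``first ingredient''. You assert that the degree separation $d_G(b)\le |A|$ for $b\in B$ implies that a vertex $b\in B$ incident to an edge of $G[B]$ ``either has small degree \ldots\ or has a small non-neighbourhood in $A$''. This dichotomy does not follow. Take $|A|=n/2$ and a vertex $b\in B$ with $n/4$ neighbours in $A$ and one neighbour in $B$: then $d_G(b)=n/4+1$ and $|\Non_G(b)\cap A|=n/4$, neither of which is small, while $\splittance_G(A,B)$ can be as low as~$1$. Hence neither scanning the adjacency list nor computing $\sumset(\Non_G(b)\cap A\cap S_{i,c})$ by enumeration is affordable, and the counters of $\promisenl$ give you no other handle on $N_G(b)\cap B$ for $b\in B$. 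Your ``second ingredient'' only treats the $\Oh(t)$ vertices of $W$ that actually switched sides, so it does not cover, say, the $\Oh(k)$ endpoints of $\edgesB^g$ that stayed in $B$ throughout.

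The paper organises the flush differently. Rather than keeping the promise structures current, it \emph{freezes} them at the last good state $G_s$ with partition $(A_s,B_s)$; this is the content of the invariant \iProm. The key structural fact, \cref{claim:intersection}, is that $|A_s\cap B_t|,\,|A_t\cap B_s|\in\Oh(\sqrt{r+k})$, where $r$ is the number of deferred updates. The edges of $G_t[B_t]$ are then split into three types according to where their endpoints lie in $(A_s,B_s)$: both in $B_s\cap B_t$ (recovered from the old $\edgesB$ plus the queued edges), both in $A_s\cap B_t$ (brute force over $\Oh(r+k)$ pairs, using the $\sqrt{r+k}$ bound), and one in each. For the last type, the paper applies a \emph{temporary} edge modification to the frozen $\promisenl$ so that for each $a\in A_s\cap B_t$ --- which is still on the $A_s$-side in that structure --- one can call $\oper{listNeighborsBS}(a)$ and obtain its at most $k$ neighbours in $B_s\cap B_t$. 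No query of the form ``$N_G(b)\cap B$ for $b\in B$'' is ever needed. Only after $\nonEdgesA,\edgesB$ are rebuilt are the promise structures brought forward to $G_t$ with one batched update.

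In short: your localisation to $W$ is sound, but making the per-vertex recovery work requires the paper's three-way decomposition of $B_t$ relative to $(A_s,B_s)$, the $\Oh(\sqrt{r+k})$ bound on side-switchers, and querying the promise structure on the \emph{old} partition so that the relevant vertices sit on the $A$-side. The degree-based dichotomy you invoke is not a substitute for this.
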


\begin{proof}
    The data structure $\nsample[n, k, d]$ stores:
    \begin{itemize}
        \item the adjacency list representation of $G$ and
        representation of $A$ and $B$ as sets;
        \item the values of $k$ and $d$;
        \item an instance $\dsplit[n]$ of the data structure from \cref{lem:splittance} running on $G$;
        \item an instance $\promisenl[n, k, \ell, d', S]$ of the data structure from \cref{lem:listing1}, where $\ell = k$, $d' = d+3$ and $S = V(G)$;
        \item an instance $\promisens[n, k, \ell', d]$ of the data structure from \cref{lem:sampling}, where $\ell' = 10k$;
        \item a set $\oper{verticesUpd} \subseteq V(G)$ and
        sets
        $\oper{edgesUpd}, \oper{edgesB},
        \oper{nonEdgesA} \subseteq \binom{V(G)}{2}$
        described in the next paragraph.
    \end{itemize}
    
    Let $G_0, G_1, \ldots, G_t$ be a sequence of consecutive graphs
    maintained by our data structure, that is,
    for every $i = 0, 1, \ldots, t-1$, $G_{i+1}$ is obtained from $G_i$
    by adding or removing a single edge.
    Moreover, let $(A_i, B_i)$ be the corresponding partition
    of  $V(G_i)$ maintained by the data structure,
    and let $E_i$ be the set of edges of $G_i$, for $i = 0, \ldots, t-1$.
    Define $s \coloneqq \max\ \{ 0 \leq i \leq t \mid \splittance(G_i) \leq k \}$.
    Note that $s$ is well-defined as the initial graph $G_0$ is a
    split graph.
    Then, sets $\oper{verticesUpd},
        \oper{edgesUpd}, \oper{edgesB},
        \oper{nonEdgesA}$
    are defined by the following invariants that must be satisfied at all times:
    \begin{itemize}
        \item \iEdge $\oper{edgesB}$ is the set of all edges of the graph $G_s[B_s]$, and $\oper{nonEdgesA}$ is the set of all non-edges of the graph $G_s[A_s]$, both with probability $1 - \Oh(n^{-d})$;
        \item \iUpd $\oper{verticesUpd} = A_s \symd A_t = B_s \symd B_t$, and $\oper{edgesUpd} = E_s \symd E_t$;
        \item \iProm the data structures $\promisenl$ and $\promisens$ store the graph $G_s$ with the partition $(A_s, B_s)$ of its vertices. 
    \end{itemize}
    In other words, sets $\oper{edgesB}$ and $\oper{nonEdgesA}$ store the answers to the queries $\oper{listEdgesB}()$ and $\oper{listNonEdgesA}()$ that were correct at the last time step when the splittance was bounded by $k$. Sets $\oper{verticesUpd}$ and $\oper{edgesUpd}$ store all the updates that got accumulated since this time step.
    Also, note that the data structures $\promisenl$ and
    $\promisens$ run on the graph $G_s = (A_s \uplus B_s, E_s)$,
    not $G_t$, as they require that the stored graph $G$ and its
    partition $(A,B)$ satisfy
    $\splittance_G(A,B) \leq k$.

    Let us observe at this point that one can assume that the inequality $s \leq n^2$ holds.
    This comes from the fact that after  $n^2$ updates, we can recompute the correct values of $\oper{edgesB}$ and $\oper{nonEdgesA}$ for $G_s$ (and renumber $G_0 \coloneqq  G_s$) which amortizes to $\Ohtilde(1)$ cost per update.
    
    \subparagraph*{Initialization.}
    \iflipics
    Since the initialization of \nsample\xspace is straightforward, we moved the detailed description to \cref{sec:app-edges-inside}.
    \else
    \begin{itemize}
    \item We initialize the graph $G=([n],\emptyset)$
     and $(A,B)=(\emptyset,V)$. We set $\nonEdgesA = \edgesB = \emptyset$.
    \item The data structures $\dsplit[n]$, $\promisenl[n, k, \ell, d', V(G)]$ and $\promisens[n, k, \ell', d]$ are initialized with the corresponding parameters.
    \item We set $\oper{edgesB}$, $\oper{nonEdgesA}$,
    $\oper{verticesUpd}$ and $\oper{edgesUpd}$ to be empty sets.
\end{itemize}
The running time is dominated by the initialization of all the stored data structures which takes time
\[
    \Ohtilde(n) + \Ohtilde(\ell d' \cdot n) + \Ohtilde(\ell' d^2 \cdot n) \subseteq \Ohtilde(kd^2 \cdot n).
\]
    \fi
    
    \subparagraph*{Update.}
    Consider an operation $\oper{update}(uv)$, where $uv$ is an edge to
    be inserted or deleted.
    First, we pass the same update to the inner structure $\dsplit$ which returns a subset of vertices $\Vmoved$ of size $\Oh(1)$ which need to be moved between $A$ and $B$.
    Then, we can update the sets $A$ and $B$, accordingly.
    Next, we update sets $\oper{verticesUpd}$ and
    $\oper{edgesUpd}$ as follows:\\
    $
        \oper{verticesUpd}' = \oper{verticesUpd} \symd \Vmoved \quad \text{ and } \quad \oper{edgesUpd}' = \oper{edgesUpd} \symd \{ e \}.
    $

    Let $G_s, G_{s+1}, \ldots, G_t = G$ be the suffix of the sequence of consecutive graphs given to the data structure, where $s$ is the previous moment our dynamic graph $G$ was of splittance at most $k$.
    If $\splittance(G_t) > k$, we can finish the update as all the invariants \iEdge, \iUpd and \iProm hold.
    The running time is $\Ohtilde(1)$.

    Now, assume that $\splittance(G_t) \leq k$.
    Then $\max\ \{ 0 \leq i \leq t \mid \splittance(G_i) \leq k \} = t$, and we need to perform additional modifications so that the invariants \iEdge, \iUpd and \iProm hold.
    Let $r \coloneqq t - s$ be the number of updates that were applied since the last time our graph was of splittance at most $k$.
    Our goal now is to recompute the sets $\edgesB$ and $\nonEdgesA$, so that the invariant \iEdge holds, in time proportional to $r$.
    First, let us make a~useful observation.

    \iflipics
    \begin{claim}[$\app$]\label{claim:intersection}
    \else
    \begin{claim}\label{claim:intersection}
    \fi
        \claimintersectiontext
    \end{claim}
    \newtheorem*{claimintersection}{\cref{claim:intersection}}
    \iflipics
    \else
    \begin{claimproof}
    Denote $m \coloneqq  |A_s \cap B_t|$.
Since $\splittance(G_s) \leq k$, there were at most $k$ non-edges in the subgraph $G_s[A_s]$, and consequently there were at least $\binom{m}{2} - k$ edges in $G_s[A_s \cap B_t]$.
On the other hand, we also have $\splittance(G_t) \leq k$, and thus there are at most $k$ edges in $G_t[B_t]$, and in particular, at most $k$ edges in $G_t[A_s \cap B_t]$.

Starting from $G_s$ we removed at most $r$ edges until we reached $G_t$, hence
$
    r \geq \left( \binom{m}{2} - k \right) - k.
$
This implies that $m \leq \Oh(\sqrt{r + k})$ as desired.
The second inequality is analogous.
    \end{claimproof}
    \fi

    Now, let us focus on updating the set $\edgesB$, so that after the update it reflects time step $t$ instead of $s$ (the set $\nonEdgesA$ is handled analogously).
    In other words, we need to locate all the edges of $G_t$ with both endpoints in $B_t$.
    Denote this new set by $\edgesB'$.
    All such edges can be classified into the following three
    categories (see \cref{fig:edgecases} for illustration). In the following, we use variables $\edgesB$
    and $\nonEdgesA$ as they are in the data structure before
    updating, i.e., their content refers to the time step $s$.

\begin{figure}[h]
\centering
\begin{tikzpicture}[scale=1.2]
   \tikzstyle{vertex}=[circle,fill=black,minimum size=0.15cm,inner sep=0pt]
   \tikzstyle{edge}=[very thick]
   \tikzstyle{impedge}=[very thick,red]
   \tikzstyle{impnonedge}=[very thick,dashed,black]

        \fill[orange!20] (-2,-1.3) rectangle (0,2);

        \fill[blue!20] ( 2,-1.3) rectangle (0,2);
        \node at ( 1,2.3) {$B_t$};

        \node at (-1,2.3) {$A_t$};
        \fill[orange!40] ( -2,0.3) rectangle (0,2);

        \fill[blue!40] ( 2,0.3) rectangle (0,2);
        \node at (-2.4,1.1) {$A_s$};
        \node at (-2.4,-0.6) {$B_s$};

        \node[vertex] (x) at (0.3, 1.5) {};
        \node[above] at (x) {$a_1$};
        \node[vertex] (y) at (0.8, 1.5) {};
        \node[above] at (y) {$a_2$};

        \node[vertex] (xx) at (0.3, 1.2) {};
        \node[below] at (0, 1.6) {$(2)$};
        \node[vertex] (yy) at (0.6, 1.2) {};

        \node[vertex] (z) at (0.3,-0.8) {};
        \node[below] at (z) {$b_1$};
        \node[vertex] (t) at (0.8,-0.8) {};
        \node[below] at (t) {$b_2$};
        \node[vertex] (zz) at (0.3,-0.5) {};
        \node[above] at (zz) {$(1)$};
        \node[vertex] (tt) at (0.6,-0.5) {};

        \node[vertex] (q) at (1.7, 1.5) {};
        \node[above] at (q) {$a_1$};
        \node[vertex] (p) at (1.7,-0.8) {};
        \node[below] at (p) {$b_1$};
        \node[vertex] (qq) at (1.4, 1.2) {};
        \node[vertex] (pp) at (1.4,-0.5) {};
        \node at (2,0.3) {$(3)$};

        \node[vertex] (qqq) at (0.7, 0.6) {};
        \node[vertex] (ppp) at (0.7,0) {};
        \node[vertex] (qqqq) at (0.9, 0.6) {};
        \node[vertex] (pppp) at (0.9,0) {};
        \node at (0.65,0.75) {$\Emod_1$};

        \node[vertex] (w) at (0.1, 0.4) {};
        \node[vertex] (e) at (-0.5,-0.3) {};
        \node[vertex] (r) at (0.3, 0.4) {};
        \node[vertex] (u) at (-0.3,-0.3) {};
        \node at (-0.35,0.35) {$\Emod_2$};

        \draw[edge] (x) -- (y);
        \draw[edge] (xx) -- (yy);
        \draw[edge] (z) -- (t);
        \draw[edge] (zz) -- (tt);
        \draw[edge] (p) -- (q);
        \draw[edge] (pp) -- (qq);
        \draw[impnonedge] (ppp) -- (qqq);
        \draw[impnonedge] (pppp) -- (qqqq);
        \draw[impnonedge] (w) -- (e);
        \draw[impnonedge] (r) -- (u);

\end{tikzpicture}
\caption{Edge types needed for recomputing $\oper{edgesB}$ in the proof of Lemma~\ref{lem:listing2}.}
\label{fig:edgecases}
\end{figure}
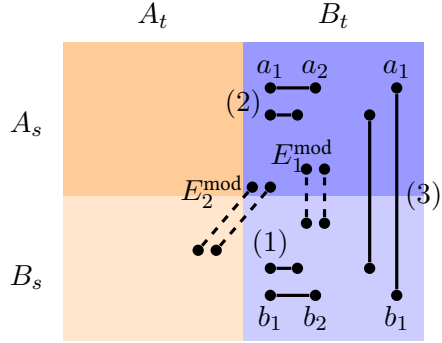

    \begin{itemize}
        \item[(1)] Edges $b_1 b_2$, where $b_1, b_2 \in B_s \cap B_t$.
        
        Observe that every edge in $G_t$ of this form is either
        \begin{itemize}
            \item an edge of $G_s[B_s]$, but there were at most $k$ such ones as $\splittance(G_s) \leq k$; or
            \item an edge inserted during an update after time step $s$, but there are at most $r$ such ones.
        \end{itemize}
        So using the sets $\edgesB$ and $\oper{edgesUpd}$ we can enumerate all such edges in time~$\Ohtilde(k + r)$.

        \item[(2)] Edges $a_1 a_2$, where $a_1, a_2 \in A_s \cap B_t$.
        
        We can list the set $A_s \cap B_t \subseteq \oper{verticesUpd}$ in time $\Ohtilde(r)$, and by \cref{claim:intersection} we can check all the pairs of vertices of $A_s \cap B_t$ in time $\Ohtilde(r + k)$.

        \item[(3)] Edges $a_1 b_1$, where $a_1 \in A_s \cap B_t$ and $b_1 \in B_s \cap B_t$.
        
        To list all such edges we start with applying on $\promisenl$ a temporary modification by calling
        \[
            \oper{update}(\emptyset,\ \Emod,\ \nonEdgesA,\ \edgesB),
        \] where $\Emod=\Emod_1 \cup \Emod_2$ consists of:
        \begin{itemize}
            \item $\Emod_1$: all edges of $\oper{edgesUpd}$ of the form $ab$, where $a \in A_s \cap B_t$ and $b \in B_s \cap B_t$; there are at most $r$ such edges.
            \item $\Emod_2$: all edges of the graph $G_s$ of the form $ab$, where $a \in A_s \cap B_t$ and $b \in B_s \cap A_t$; we can list these edges brutally in time $\Ohtilde(r+k)$ by \cref{claim:intersection}.
        \end{itemize}
        That is, structure $\promisenl$ works now on the graph $\Gtmp = (A_s \uplus B_s, \Etmp)$ such~that
        \begin{itemize}
            \item the graphs $\Gtmp[A_s]$ and $\Gtmp[B_s]$ are isomorphic $G_s[A_s]$ and $G_s[B_s]$, respectively,
            \item there are no edges between $A_s \cap B_t$ and $A_t \cap B_s$, and
            \item the edges between $A_s \cap B_t$ and $B_s \cap B_t$ coincide in the graphs $\Gtmp$ and $G_t$.
        \end{itemize}
        The cost of this temporary update is $\Ohtilde(k + \ell d'(k + r)) = \Ohtilde(kd(k + r))$.
        Observe that $\edgesB$ and $\nonEdgesA$ store currently
        respectively the edges within $B_s$ and non-edges within
        $A_s$ in $\Gtmp$, and thus $\splittance_{\Gtmp}(A_s,B_s) \leq k$,
        as required by the $\promisenl$ data structure.

        Now, we iterate over all vertices $a \in A_s \cap B_t$;
        recall that their number is bounded by $r$ as they are all
        contained in $\oper{verticesUpd}$.
        Recall that $\splittance_{G_t}(A_t, B_t) \leq k$, hence
        $a$ has at most $k$ neighbors in $B_s \cap B_t$, and we
        would like to list them all.
        However, this can be done precisely by calling
        $\oper{listNeighborsBS}(a)$ on $\promisenl$.
        The total cost of such queries for
        all $a_1 \in A_s \cap B_t$ is
        \[
            \Ohtilde(|A_s \cap B_t| \cdot k \ell d') \leq \Ohtilde(r k^2 d).
        \]
        Finally, we revert $\promisenl$ to the state before the temporary update, by calling
        \[
            \oper{update}(\emptyset,\ \Emod,\ \nonEdgesA,\ \edgesB).
        \]
    \end{itemize}
    Summing up, the amortized cost of computing the set $\edgesB'$ (and $\nonEdgesA'$) is upper-bounded by
    \[
        \frac{\Ohtilde(k + r) + \Ohtilde(r + k) + \Ohtilde(k d (k + r)) + \Ohtilde(r k^2d)}{r} \subseteq \Ohtilde(k^2 d).
    \]
    Now, we need to bound the probability of incorrectly computing the sets $\edgesB'$ and $\nonEdgesA'$.
    In the process of recomputation we do at most $\Oh(n)$ queries to $\promisenl$, each having an error probability $\Oh(n^{-d'}) = \Oh(n^{-(d+3)})$.
    Hence, with probabilty $1 - \Oh(n^{d+2})$ all these queries return correct answers.
    Furthermore, we rely on the fact that the sets $\edgesB$ and $\nonEdgesA$ were computed correctly for the graph $G_s$.
    Recall that we can assume $s \leq n^2$, and thus by the union bound these sets were computed correctly for all the graphs $G_0, \ldots, G_t$ with probability at least
    \[
        1 - n^2 \cdot \Oh(n^{-(d+2)}) = 1 - \Oh(n^{-d}).
    \]
    
    Next, we should update the structures $\promisenl$ and $\promisens$ so that their inner states correspond to the graph $G_t$.
    To this end, we run on both structures the update
    \[
        \oper{update}(\oper{verticesUpd},\ \oper{edgesUpd},\ \nonEdgesA',\ \edgesB').
    \]
    Then, the invariant \iProm holds, and the amortized cost of these updates is
    \[
        \frac{\Ohtilde(k + r \ell d') + \Ohtilde(k d + r\ell' d^2 )}{r} \subseteq \Ohtilde(k d^2).
    \]

    Finally, we set $\oper{verticesUpd}' \coloneqq \emptyset$ and $\oper{edgesUpd}' \coloneqq \emptyset$, so that invariant \iUpd holds.
    
    \subparagraph*{Queries.}
    \iflipics
    Each query can be answered by doing a single call to one of the structures $\dsplit$ or $\promisens$.
    The details are given in \cref{sec:app-edges-inside}.
    \else
    To give the current value of $\splittance(G)$, it is enough to call $\oper{splittance}()$ on $\dsplit$.
Now, assume that $\splittance(G) \leq k$.
Consequently, by the invariant \iEdge, all the inner variables store up-to-date values for the current graph $G = G_t$.
Hence, to answer queries $\oper{listEdgesB}()$ and $\oper{listNonEdgesA}()$, it is enough to return the current sets $\oper{edgesB}$ and $\oper{nonEdgesA}$, respectively.
Finally, to answer queries $\oper{sampleEdges}(a)$ for $a \in A$ and $\oper{sampleNonEdges}(b)$ for $b \in B$, it is enough to do the same calls on the structure $\promisens$ whose inner state is up-to-date according to the invariant \iProm.

    \fi 
    \end{proof}

\section{Localizing the obstructions}
\label{sec:obstacles}

Let us recall that the graph is split if and only if it is
$\Fc$-free, where $\Fc = \{2K_2, C_4, C_5\}$.
We call the members of ${\Fc}$ simply {\em{obstructions}}.
In this section, using the interface of the data structure $\nsample[n, k, d]$ from \cref{lem:listing2}, we show how to extend this data structure with a method that is able to either conclude that $G$ is split, or report an obstruction in $G$ witnessing that it is not split.
The following statement encapsulates this result.

\begin{lemma}\label{lem:obstacles}
    The data structure $\nsample[n, k, d]$
    of \cref{lem:listing2} can be extended by a~method
    \begin{itemize}
        \item $\oper{findObstruction}()$ which returns \ansSplit if $\splittance(G) = 0$; or a subset of vertices $U \subseteq V(G)$ such that the induced subgraph $G[U]$ is isomorphic to one of the graphs of $\Fc = \{ 2K_2, C_4, C_5 \}$ if $0 < \splittance(G) \leq k$.
    \end{itemize}
    The query runs in time $\Ohtilde(k^{\Oh(1)} \cdot d^2)$ and is correct with probability $1-\Oh(n^{c-d})$ for some constant $c$ whenever $\splittance(G) \leq k$; without this assumption, there are no guarantees on the correctness.
\end{lemma}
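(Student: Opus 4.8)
The plan is to implement $\oper{findObstruction}()$ on top of the interface of $\nsample[n,k,d]$ from \cref{lem:listing2}. First I would call $\oper{splittance}()$: if it returns $0$, then the maintained partition $(A,B)$ certifies that $G$ is split and we answer \ansSplit. Otherwise $0<\splittance(G)\le k$, so by \cref{fact:split-forbidden} at least one obstruction exists, and the task is to locate one. The key structural observation is that every obstruction $U$ (with $G[U]$ isomorphic to $2K_2$, $C_4$ or $C_5$) contains either a non-edge with both endpoints in $A$ or an edge with both endpoints in $B$; otherwise $G[U\cap A]$ would be a clique and $G[U\cap B]$ an independent set, making $G[U]$ split. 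Using $\oper{listNonEdgesA}()$ and $\oper{listEdgesB}()$ we obtain all $\le k$ such ``anchor'' (non-)edges, and some obstruction must use one of them as two of its vertices. For each anchor $\{x,y\}$ we run a completion routine that tries to extend $\{x,y\}$ to a full obstruction, and we return the first success; since an obstruction exists, at least one anchor/completion succeeds. The edge-anchor case is symmetric to the non-edge case: complementing $G$ swaps $A$ with $B$, maps $\{2K_2,C_4,C_5\}$ to itself, and turns each operation of \cref{lem:listing2} into its available dual ($\oper{listEdgesB}\leftrightarrow\oper{listNonEdgesA}$, $\oper{sampleEdges}\leftrightarrow\oper{sampleNonEdges}$), so it suffices to describe the completion routine for an anchor that is a non-edge $xy$ with $x,y\in A$.

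The completion routine branches over a bounded number of cases: the obstruction type, the automorphism class of the positions of $x$ and $y$ in it, and, for each of the remaining $\le 3$ vertices, whether it lies in $A$ or in $B$. In each case we fix a suitable order in which to guess the remaining vertices and guess them one at a time. At each step the next vertex $z$ is constrained relative to already-placed vertices, and we maintain only $\mathrm{poly}(k)$ candidates: if $z\in A$ must be non-adjacent to an already-placed $w\in A$, then $z\in\Non(w)\cap A$, read off from $\oper{listNonEdgesA}()$ ($\le k$ options); if $z\in B$ must be adjacent to an already-placed $w\in B$, then $z\in N(w)\cap B$, read off from $\oper{listEdgesB}()$ ($\le k$ options); if $z\in B$ must be adjacent to an already-placed $w\in A$, we search among the $\le 10k$ vertices of $\oper{sampleEdges}(w)$; and dually for $z\in A$ non-adjacent to $w\in B$ via $\oper{sampleNonEdges}(w)$. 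When the position of $z$ is not pinned down this way — typically when $z$ is merely required to be a common neighbour of $x$ and $y$ inside $A$, of which there may be almost $|A|$ — we do not enumerate it; instead we schedule it to be placed after a vertex that does pin it down, and when checking a candidate we only \emph{verify} the common-neighbour condition by $\Ohtilde(1)$ adjacency queries. Each branch thus examines $\mathrm{poly}(k)$ tuples, each verified in $\Ohtilde(1)$ time plus at most one sampling query costing $\Ohtilde(k^2d^2)$; over all $\le k$ anchors and $\Oh(1)$ branches this gives total time $\Ohtilde(k^{\Oh(1)}d^2)$, and a union bound over the $\mathrm{poly}(k)$ sampling queries (each correct with probability $1-\Oh(n^{-d})$) yields the claimed error bound $1-\Oh(n^{c-d})$.

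The main obstacle — the ``skillful juggle'' alluded to in the overview — is that $\oper{sampleEdges}(w)$ returns only \emph{some} $\min(10k,|N(w)\cap B|)$ neighbours, so when $|N(w)\cap B|>10k$ the particular vertex demanded by the obstruction we are assembling may be invisible, and similarly for $\oper{sampleNonEdges}$. The resolution is a dichotomy that exploits the sparsity guarantees $|\oper{edgesB}|,|\oper{nonEdgesA}|\le k$. If a sampled set is not exhausted, then all but $\le 2k$ of its $10k$ elements avoid $\mathrm{Irr}_B$, the set of endpoints of the edges returned by $\oper{listEdgesB}()$, and hence are pairwise non-adjacent; dually, all but $\le 2k$ sampled non-neighbours of a vertex of $B$ avoid $\mathrm{Irr}_A$, the endpoints of $\oper{listNonEdgesA}()$, and are therefore adjacent to every other vertex of $A$. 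Combining such a large ``generic'' pool with the non-adjacent $A$-pair $\{x,y\}$ supplied by the anchor, in each case we can either read off a generic substitute for the missing vertex — e.g., any sampled non-neighbour $a$ of a vertex $b\in B$ that avoids $\mathrm{Irr}_A$ is automatically a common neighbour of $x$ and $y$, hence a valid choice — or directly exhibit a (possibly different) $2K_2$ or $C_4$: for instance, if both $x$ and $y$ have more than $10k$ neighbours in $B$, then either they have $\Omega(\sqrt{k})$ common neighbours in $B$, two of which must be non-adjacent and form a $C_4$ together with $x,y$, or the sets $N(x)\cap B\setminus N(y)$ and $N(y)\cap B\setminus N(x)$ are both large, almost-independent, and joined by at most $k$ edges, so a non-edge between them completes a $2K_2$ with $x,y$. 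Verifying that this dichotomy indeed covers every case of the completion routine, and arranging the guessing order so that a pinning-down vertex always precedes each verify-only vertex, is the technical heart of the argument.
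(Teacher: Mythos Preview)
Your plan is correct and is essentially the paper's approach: anchor on a (non-)edge from $\nonEdgesA\cup\edgesB$, branch over obstruction type and placement with respect to $(A,B)$, and resolve the ``sampled set may miss the target vertex'' problem via a dichotomy that either produces an obstruction directly or bounds a neighbourhood across the partition. The paper organises the case analysis by $|U\cap A|$ rather than by anchor, and it packages your dichotomy into two reusable subroutines (its Claims for a non-edge $a_1a_2$ in $A$ and, dually, an edge $b_1b_2$ in $B$): either return a $2K_2$/$C_4$ containing the pair, or certify that one endpoint has at most $3\sqrt{k}$ neighbours (resp.\ non-neighbours) on the other side, returning that small set.

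The one place your sketch needs tightening is the threshold in the dichotomy. As you state it, it fires only when \emph{both} endpoints have more than $10k$ neighbours in $B$, so in the complementary case you only learn that one endpoint has at most $10k$ such neighbours. That bound is too coarse for the subsequent pigeonhole: for instance, in the case $U\cap A=\{x,y,z\}$ with $xz,yz\in\nonEdgesA$ and $|\NBsample(z)|=10k$, you must locate some $t'\in\NBsample(z)\setminus(N_B(x)\cup N_B(y))$, and two excluded sets of size up to $10k$ each can swallow the whole sample. But your own $\Omega(\sqrt{k})$ argument (common neighbours $\Rightarrow C_4$, else disjoint parts $\Rightarrow 2K_2$) already goes through as soon as both sampled neighbourhoods exceed $3\sqrt{k}$; using that sharper threshold gives $|N_B(x)|,|N_B(y)|\le 3\sqrt{k}$ in the non-obstruction branch, and then $10k>3\sqrt{k}+3\sqrt{k}$ (and also $10k>3\sqrt{k}+k$ when one excluded set comes from $\edgesB$) makes every case close. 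This $3\sqrt{k}$ threshold is precisely what the paper uses.
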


\begin{proof}
We assume that $\splittance(G) = \splittance_G(A, B) \leq k$, otherwise we can provide any answer.
In what follows, we use extensively the following two
observations. 

\begin{claim}\label{claim:sub1}
    \claimsubtext
\end{claim}
\begin{claimproof}
Using the calls $\oper{sampleEdges}(\cdot)$ on $\nsample$ in
time $\Ohtilde(k^2 d^2)$ we can extract sets $N_1$ and
$N_2$, where $N_t$ contains $\min(|N_G(a_t)\cap B|,10k)$
neighbors of $a_t$ in $B$, for $t\in \{1,2\}$.
If $|N_1| \leq 3\sqrt{k}$ or $|N_2| \leq 3\sqrt{k}$,
then the second case holds and we can report this together with
the corresponding set of at most $3\sqrt{k}$ neighbors in $B$.
So suppose otherwise: both $|N_1| > 3\sqrt{k}$ and $|N_2| > 3\sqrt{k}$.

We now consider two cases.
First, suppose that $|N_1 \cap N_2| > 2\sqrt{k}$.
Recall that $\splittance(G) \leq k$, and thus there are at
most $k$ edges in graph $G[B]$.
Since the graph $G[N_1 \cap N_2]$ has more than
$
    \frac{2\sqrt{k} \cdot (2\sqrt{k} - 1)}{2} = 2k - \sqrt{k} \geq k
$
pairs of vertices, and $N_1 \cap N_2 \subseteq B$, there must be a non-edge $b_1b_2 \not\in E(G)$ such that $b_1, b_2 \in N_1 \cap N_2$.
Then, one can observe that the induced subgraph $G[\{a_1, a_2, b_1, b_2\}]$ is isomorphic to $C_4$. It is easy to see that this reasoning can be turned into an $\Ohtilde(k)$-time algorithm to find $b_1,b_2$.

Otherwise, if $|N_1 \cap N_2| \leq 2\sqrt{k}$, then $|N_1 \setminus N_2| > \sqrt{k}$ and $|N_2 \setminus N_1| > \sqrt{k}$.
Again, using the fact there are at most $k$ edges in the graph $G[B]$, there must be a non-edge $b_1b_2 \not\in E(G)$ such that $b_1 \in N_1 \setminus N_2$ and $b_2 \in N_2 \setminus N_1$.
Then, we observe that the induced subgraph $G[\{a_1, a_2, b_1, b_2\}]$ is isomorphic to $2K_2$. Again, such $b_1,b_2$ can be found in time $\Ohtilde(k)$.
\end{claimproof}

The second observation is symmetric to the first one.
We include the statement for clarity.

\begin{claim}
\label{claim:sub2}
    Let $b_1, b_2 \in B$ be vertices such that $b_1b_2 \in E(G)$. Then, at least one of the following holds:
    \begin{itemize}
        \item there is a subset $U \subseteq V(G)$ such that $\{b_1, b_2\} \subseteq U$ and $G[U]$ is isomorphic to $2K_2$ or $C_4$;
        \item at least one of the vertices $b_1, b_2$ has at most $3\sqrt{k}$ non-neighbors in $A$.
    \end{itemize}
    Moreover, in time $\Ohtilde(k^2 d^2)$ we can return a
    case that holds, together with a witnessing set $U$ for the
    first case, or the non-neighborhood of $b_1$ or $b_2$ in
    $A$ for the second case. The answer is correct with probability
    at least $1 - \Oh(n^{-d})$.
\end{claim}

We are ready to proceed with the description of the
$\oper{findObstruction}(\cdot)$ query.
Suppose now that there is a set $U \subseteq V(G)$ such that
$G[U]$ is an obstruction.
We examine all three cases for $G[U] \in \{2K_2, C_4, C_5\}$, and
in each case we provide a procedure that, assuming the existence of
$U$ as above, detects some set $U'$ (possibly different from $U$)
that induces an obstruction. If no case yields an obstruction,
we conclude that $G$ is split.
It will be always clear that the procedure for locating $U'$ runs
in time $\Ohtilde(k^{\Oh(1)})$.
Recall that by running $\oper{listNonEdgesA}()$ and $\oper{listEdgesB}()$, we have access to the sets $\oper{nonEdgesA}$ containing all non-edges within $A$ and $\oper{edgesB}$ containing all edges within $B$, and each of these sets has cardinality at most $k$. For brevity, a set is {\em{small}} if its size is at most $3\sqrt{k}$.

\iflipics
\else

\begin{figure}[h]
\centering
\begin{tikzpicture}[scale=1.2]

   \tikzstyle{vertex}=[circle,fill=black,minimum size=0.15cm,inner sep=0pt]
   \tikzstyle{edge}=[very thick]
   \tikzstyle{impedge}=[very thick,red]
   \tikzstyle{impnonedge}=[very thick,dashed,red]

   \begin{scope}[shift={(-4.5,1)},scale=0.5]

        \fill[orange!20] (-2,-1.3) rectangle (0,2);
        \node at (-1,1.6) {$A$};
        \fill[blue!20] ( 2,-1.3) rectangle (0,2);
        \node at ( 1,1.6) {$B$};

        \node[vertex] (x) at (-1.5, 0.5) {};
        \node[above] at (x) {$x$};
        \node[vertex] (y) at (-0.5, 0.5) {};
        \node[above] at (y) {$y$};
        \node[vertex] (z) at (-1.5,-0.5) {};
        \node[below] at (z) {$z$};
        \node[vertex] (t) at (-0.5,-0.5) {};
        \node[below] at (t) {$t$};

        \draw[edge] (x) -- (y);
        \draw[edge] (z) -- (t);
        \draw[impnonedge] (x) -- (t);
        \draw[impnonedge] (z) -- (y);

        \node at (0,-2) {(a)};
   \end{scope}

   \begin{scope}[shift={(-2.25,1)},scale=0.5]

        \fill[orange!20] (-2,-1.3) rectangle (0,2);
        \node at (-1,1.6) {$A$};
        \fill[blue!20] ( 2,-1.3) rectangle (0,2);
        \node at ( 1,1.6) {$B$};

        \node[vertex] (x) at (-1.2, 0) {};
        \node[left] at (x) {$x$};
        \node[vertex] (y) at (-0.4, 0.8) {};
        \node[above] at (y) {$y$};
        \node[vertex] (z) at (-0.4,-0.8) {};
        \node[below] at (z) {$z$};
        \node[vertex] (t) at (0.4,0) {};
        \node[right] at (t) {$t$};

        \draw[edge] (x) -- (y);
        \draw[edge] (z) -- (t);
        \draw[impnonedge] (y) -- (z);
        \draw[impnonedge] (z) -- (x);

        \node at (0,-2) {(b)};
   \end{scope}

   \begin{scope}[shift={(0,1)},scale=0.5]
     \fill[orange!20] (-2,-1.3) rectangle (0,2);
        \node at (-1,1.6) {$A$};
        \fill[blue!20] ( 2,-1.3) rectangle (0,2);
        \node at ( 1,1.6) {$B$};

        \node[vertex] (x) at (-0.7, 0.7) {};
        \node[left] at (x) {$x$};
        \node[vertex] (y) at (-0.7, -0.7) {};
        \node[left] at (y) {$y$};
        \node[vertex] (z) at ( 0.7, 0.7) {};
        \node[right] at (z) {$z$};
        \node[vertex] (t) at ( 0.7, -0.7) {};
        \node[right] at (t) {$t$};

        \draw[edge] (x) -- (y);
        \draw[impedge] (z) -- (t);

        \node at (0,-2) {(c)};
   \end{scope}

   \begin{scope}[shift={(2.25,1)},scale=0.5]

        \fill[orange!20] (-2,-1.3) rectangle (0,2);
        \node at (-1,1.6) {$A$};
        \fill[blue!20] ( 2,-1.3) rectangle (0,2);
        \node at ( 1,1.6) {$B$};

        \node[vertex] (x) at (-0.7, 0.7) {};
        \node[left] at (x) {$x$};
        \node[vertex] (y) at ( 0.7,  0.7) {};
        \node[right] at (y) {$y$};
        \node[vertex] (z) at (-0.7, -0.7) {};
        \node[left] at (z) {$z$};
        \node[vertex] (t) at ( 0.7, -0.7) {};
        \node[right] at (t) {$t$};

        \draw[edge] (x) -- (y);
        \draw[edge] (z) -- (t);
        \draw[impnonedge] (z) -- (x);

        \node at (0,-2) {(d)};
   \end{scope}

   \begin{scope}[shift={(4.5,1)},scale=0.5]
     \fill[orange!20] (-2,-1.3) rectangle (0,2);
        \node at (-1,1.6) {$A$};
        \fill[blue!20] ( 2,-1.3) rectangle (0,2);
        \node at ( 1,1.6) {$B$};

        \node[vertex] (x) at ( 1.2, 0) {};
        \node[right] at (x) {$x$};
        \node[vertex] (y) at ( 0.4, 0.8) {};
        \node[above] at (y) {$y$};
        \node[vertex] (z) at ( 0.4,-0.8) {};
        \node[below] at (z) {$z$};
        \node[vertex] (t) at (-0.4,0) {};
        \node[left] at (t) {$t$};

        \draw[impedge] (x) -- (y);
        \draw[edge] (z) -- (t);

        \node at (0,-2) {(e)};
    \end{scope}

   \begin{scope}[shift={(-3.375,-1.5)},scale=0.5]
     \fill[orange!20] (-2,-1.3) rectangle (0,2);
        \node at (-1,1.6) {$A$};
        \fill[blue!20] ( 2,-1.3) rectangle (0,2);
        \node at ( 1,1.6) {$B$};

        \begin{scope}[shift={(-1,0)}]
        \node[vertex] (w) at (0:0.8) {};
        \node[right] at (w) {$w$};
        \node[vertex] (x) at (72:0.8) {};
        \node[above] at (x) {$x$};
        \node[vertex] (y) at (144:0.8) {};
        \node[left] at (y) {$y$};
        \node[vertex] (z) at (216:0.8) {};
        \node[left] at (z) {$z$};
        \node[vertex] (t) at (288:0.8) {};
        \node[below] at (t) {$t$};
        \draw[edge] (x) -- (y) -- (z) -- (t) -- (w) -- (x);
        \draw[impnonedge] (x) -- (z) -- (w) -- (y) -- (t) -- (x);
        \end{scope}

        \node at (0,-2) {(a)};
    \end{scope}

   \begin{scope}[shift={(-1.125,-1.5)},scale=0.5]
     \fill[orange!20] (-2,-1.3) rectangle (0,2);
        \node at (-1,1.6) {$A$};
        \fill[blue!20] ( 2,-1.3) rectangle (0,2);
        \node at ( 1,1.6) {$B$};

        \begin{scope}[shift={(-0.5,0)}]
        \node[vertex] (w) at (0:0.8) {};
        \node[right] at (w) {$w$};
        \node[vertex] (x) at (72:0.8) {};
        \node[above] at (x) {$x$};
        \node[vertex] (y) at (144:0.8) {};
        \node[left] at (y) {$y$};
        \node[vertex] (z) at (216:0.8) {};
        \node[left] at (z) {$z$};
        \node[vertex] (t) at (288:0.8) {};
        \node[below] at (t) {$t$};
        \draw[edge] (x) -- (y) -- (z) -- (t) -- (w) -- (x);
        \draw[impnonedge] (x) -- (z);
        \draw[impnonedge] (y) -- (t);
        \end{scope}

        \node at (0,-2) {(b)};
    \end{scope}
   \begin{scope}[shift={( 1.125,-1.5)},scale=0.5]
     \fill[orange!20] (-2,-1.3) rectangle (0,2);
        \node at (-1,1.6) {$A$};
        \fill[blue!20] ( 2,-1.3) rectangle (0,2);
        \node at ( 1,1.6) {$B$};

        \begin{scope}[shift={(-0.1,0)}]
        \node[vertex] (w) at (0+36:0.8) {};
        \node[right] at (w) {$w$};
        \node[vertex] (x) at (72+36:0.8) {};
        \node[above] at (x) {$x$};
        \node[vertex] (y) at (144+36:0.8) {};
        \node[left] at (y) {$y$};
        \node[vertex] (z) at (216+36:0.8) {};
        \node[below] at (z) {$z$};
        \node[vertex] (t) at (288+36:0.8) {};
        \node[right] at (t) {$t$};
        \draw[edge] (w) -- (x) -- (y) -- (z) -- (t);
        \draw[impedge] (t) -- (w);
        \draw[impnonedge] (x) -- (z);
        \end{scope}

        \node at (0,-2) {(c)};
    \end{scope}
   \begin{scope}[shift={(3.375,-1.5)},scale=0.5]
     \fill[orange!20] (-2,-1.3) rectangle (0,2);
        \node at (-1,1.6) {$A$};
        \fill[blue!20] ( 2,-1.3) rectangle (0,2);
        \node at ( 1,1.6) {$B$};

        \begin{scope}[shift={(-0.1,0)}]
        \node[vertex] (x) at (0+36:0.8) {};
        \node[right] at (x) {$x$};
        \node[vertex] (w) at (72+36:0.8) {};
        \node[above] at (w) {$w$};
        \node[vertex] (y) at (144+36:0.8) {};
        \node[left] at (y) {$y$};
        \node[vertex] (t) at (216+36:0.8) {};
        \node[below] at (t) {$t$};
        \node[vertex] (z) at (288+36:0.8) {};
        \node[right] at (z) {$z$};
        \draw[edge] (x) -- (y) -- (z) -- (t) -- (w) -- (x);
        \draw[impnonedge] (y) -- (w);
        \draw[impnonedge] (y) -- (t);
        \end{scope}

        \node at (0,-2) {(d)};
    \end{scope}

\end{tikzpicture}
\caption{Non-symmetric cases in $2K_2$ (top row) and $C_5$ (bottom row) localization. Edges and non-edges (dashed) marked in red are present on the lists $\edgesB$ and $\nonEdgesA$, respectively, hence they can be selected among $\Oh(k)$ candidates and fixed.}
\label{fig:cases}
\end{figure}
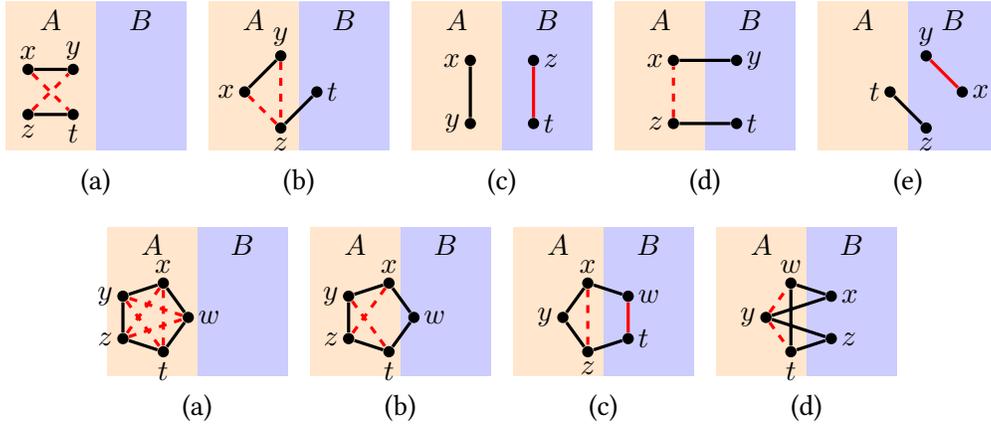

\fi

\subparagraph*{Localizing $2K_2$.} Suppose first that $G[U]=G[\{x,y,z,t\}]$ is isomorphic to $2K_2$, where $xy, zt \in E(G)$.
We consider all possible alignments of vertices $x, y, z, t$ with
respect to the partition $(A, B)$. We divide the analysis
into five cases depening on
the intersection $U \cap A$. See \cref{fig:cases} \iflipics in \cref{sec:app-obstacles} \else \fi for an illustration.

\begin{enumerate}
    \item[(a)] $|U \cap A| = 4$.
    
    Then, in particular, $xt$ and $yz$ are two non-edges within $A$.
    Hence, $U$ can be detected by exhaustive search through all the pairs of non-edges in the set $\nonEdgesA$.
    
    \item[(b)] $|U \cap A| = 3$.
    
    Without loss of generality assume that $U \cap A = \{x, y, z\}$.
    Since $xz$ and $yz$ are two non-edges with both endpoints in $A$ we can list all $\Oh(k^2)$ candidates for $(x, y, z)$ by inspecting all the pairs of non-edges in $\nonEdgesA$.
    Having $(x, y, z)$ fixed, it remains to locate (any) vertex $t' \in B$ that is a neighbor of $z$ and a non-neighbor of $x$ and $y$ (the original vertex $t\in U$ witnesses that such $t'$ exists).
    We call $\oper{sampleEdges}(z)$ on $\nsample$ to obtain a set $\NBsample(z)$ consisting of up to $10k$ neighbors of $z$ in $B$.
    If $|\NBsample(z)| < 10k$, then $\NBsample(z)$ contains all neighbors of $z$ in $B$, and we can test all of them for a vertex~$t'$.
    So assume now that $|\NBsample(z)| = 10k$.
    Then, we run the subroutine from \cref{claim:sub1} for pairs $(x, z)$ and $(y, z)$.
    If any of these calls detects and returns an obstruction, we are done.
    Otherwise, since $z$ has more than $3\sqrt{k}$ neighbors in $B$, both $x$ and $y$ must have at most $3\sqrt{k}$ neighbors in $B$ and the subroutine returns the corresponding sets.
    Denote them by $N_B(x)$ and $N_B(y)$.
    Since
    \[
        |\NBsample(z)| = 10k > 3\sqrt{k} + 3\sqrt{k} \geq |N_B(x)| + |N_B(y)|,
    \]
    there exists a vertex $t' \in \NBsample(z) \setminus (N_B(x) \cup N_B(y))$. Then $x,y,z,t'$ induce a $2K_2$.
    
    \item[(c)] $|U \cap A| = 2$ and the vertices of $U \cap A$ are adjacent in $G$.
    
    Without loss of generality assume that $U \cap A = \{ x, y \}$ and $U \cap B = \{z, t\}$.
    Then $zt$ is an edge with both endpoints in $B$, hence using the set $\edgesB$ we can list $\Oh(k)$ candidates for $(z, t)$ and thus fix $(z,t)$.
    Next, we run the subroutine from \cref{claim:sub2} for $(z, t)$, and assume that it did not detect an obstruction (for otherwise we are done).
    Then, the subroutine discovers that one of $z$ or $t$ (say $z$) has small non-neighborhood $\Non_A(z)$ in $A$. Since $x, y \in \Non_A(z)$, it is enough to check at most $\binom{3\sqrt{k}}{2}=\Oh(k)$ candidates for $(x, y)$.

    \iflipics
    \else
    \item[(d)] $|U \cap A| = 2$ and the vertices of $U \cap A$ are not adjacent in $G$.
    
Without loss of generality assume that $U \cap A = \{x, z\}$.
We see that $xz$ is a non-edge with both endpoints in $A$, thus we can list $\Oh(k)$ candidates for $(x, z)$ and  fix $(x,z)$. Next, we run the subroutine from \cref{claim:sub1} for $(x, z)$, and assume that it did not detect an obstruction, for otherwise we are done.
Without loss of generality, the subroutine returns a small neighborhood $N_B(x)$ of $x$ in $B$.
We inspect the set $N_B(x)$ for all valid candidates for $y$ (non-neighbors of $z$) and thus fix $y$.
It remains to locate any vertex $t'$ such that $zt' \in E(G)$ and $xt', yt' \not\in E(G)$ (the original vertex $t\in U$ witnesses that such $t'$ exists).
To this end, we call $\oper{sampleEdges}(z)$ on $\nsample$.
Denote the returned set by $\NBsample(z)$.
If $|\NBsample(z)| < 10k$, it contains all neighbors of $z$ in $B$, and we can test them all for a vertex $t'$.
Otherwise, we have that
\[
    |\NBsample(z)| = 10k > 3\sqrt{k} + k \geq |N_B(x)| + |\edgesB| \geq |N_B(x)| + |N_B(y)|,
\]
and hence there must be a vertex $t' \in \NBsample(z)$ which is a non-neighbor of both $x$ and~$y$. Then $x,y,z,t'$ induce a $2K_2$.

\item[(e)] $|U \cap A| = 1$.

Without loss of generality assume that $U \cap A = \{ t \}$.
Then $U \cap B = \{x, y, z\}$, and since $xy \in E(G)$, we can list $\Oh(k)$ candidates for $(x, y)$ using the set $\edgesB$, and thus fix $(x,y)$.
Next, we call the subroutine from \cref{claim:sub2} for $(x, y)$.
Assuming no obstruction is found, it returns a small non-neighborhood of $x$ or $y$ in $A$, and we can inspect it to list $\Oh(\sqrt{k})$ candidates for a vertex $t$.
Having $(x, y, t)$ fixed, it remains to find any vertex $z'\in B$ that is a neighbor of $t$ and a non-neighbor of $x$ and $y$ (the original $z\in U$ witnesses that such a $z$ exists).
We call $\oper{sampleEdges}(t)$ and denote the obtained set by $\NBsample(t)$.
Similarly as before, if $|\NBsample(t)| < 10k$, then $\NBsample(t)$ it contains all neighbors of $t$ in $B$, and we can test all of them for a vertex $z;$.
Otherwise, we see that
\[
    |\NBsample(t)| = 10k > 2|\edgesB| \geq |N_B(x)| + |N_B(y)|,
\]
hence there must be a vertex $z' \in \NBsample(t)$ which is a non-neighbor of both $x$ and $y$. Then $x,y,z',t$ induce a $2K_2$.

\item[(f)] $|U \cap A| = 0$.

Then, $xy$ and $zt$ are two edges within $B$, hence we can detect them by an exhaustive search through all the pairs of edges in the set $\edgesB$.
    \fi
\end{enumerate}

\iflipics
Due to the space constraints, we present the remaining cases in \cref{sec:app-obstacles}.
\else
\fi

\subparagraph*{Localizing $C_4$.}
Again, suppose that $G[U]=G[\{x,y,z,t\}]$ is isomorphic to~$C_4$, where $xy, zt \not\in E(G)$.
Note that $C_4$ is the complement of $2K_2$, and all our data structures are symmetric under taking the complement and swapping the sides $(A, B)$.
So the analysis of possible arrangements of $C_4$ is fully symmetric to the analysis of $2K_2$, and hence we omit it.

\subparagraph*{Localizing $C_5$.}
\iflipics
Since the analysis of possible arrangements of $C_5$ is done similarly to the case of $2K_2$, we present it in \cref{sec:app-obstacles}.
\bigskip
\else
Suppose $G[U]=G[\{x, y, z, t, w\}]$ is isomorphic to $C_5$, where $xy, yz, zt, tw, wx \in E(G)$.
We consider possible alignments of $\{x, y, z, t, w\}$ with respect to the partition $(A, B)$.
\begin{itemize}
    \item[(a)] $|U \cap A| = 5$.
    
    Then every vertex of $U$ is an endpoint of some non-edge within $A$, so we can detect $U$ by examining all 5-tuples of non-edges in $\nonEdgesA$.

    \item[(b)] $|U \cap A| = 4$.
    
    Without loss of generality we assume that $U \cap A = \{x, y, z, t\}$.
    We see that $xz, yt \not\in E(G)$, hence by iterating over the set $\nonEdgesA$ we can find $\Oh(k^2)$ candidates for $(x, y, z, t)$, and thus fix $(x,y,z,t)$.
    Next, we run the subroutine from \cref{claim:sub1} for $(x, t)$.
    Assuming no obstruction is detected, it returns a small neighborhood of $x$ or $t$ in $B$, and thus we obtain $\Oh(\sqrt{k})$ candidates for a vertex $w \in B$ which must be a neighbor of both $x$ and $t$.
    Investigating those candidate reveals an induced $C_5$.

    \item[(c)] $|U \cap A| = 3$ and the vertices of $U \cap B$ are adjacent in $G$.
    
    Without loss of generality we assume that $U \cap A = \{x, y, z\}$.
    Observe that $xz \not\in E(G)$ and $tw \in E(G)$, and thus by iterating over the sets $\nonEdgesA$ and $\edgesB$ we can list $\Oh(k^2)$ candidates for $(x, z, t, w)$, and thus fix $(x, z, t, w)$. Then we run the subroutine from \cref{claim:sub2} for $(t, w)$.
    Assuming no obstruction is detected, it returns a small non-neighborhood of $t$ or $w$ in $A$, and thus we obtain $\Oh(\sqrt{k})$ candidates for a vertex $y \in A$ which must be a non-neighbor of both $t$ and $w$. Investigating those candidate reveals an induced $C_5$.

    \item[(d)] $|U \cap A| = 3$ and the vertices of $U \cap B$ are not adjacent in $G$.
    
    Without loss of generality we assume that $U \cap B = \{x, z\}$ and $U \cap A = \{y, t, w\}$.
    We see that $yt, yw \not\in E(G)$, hence by searching through the set $\nonEdgesA$ we can list $\Oh(k^2)$ candidates for $(y, t, w)$, and thus fix $(y,t,w)$.
    Now note that $x$ is a common neighbor of $y$ and $w$ in $B$, and $z$ is a common neighbor of $y$ and $t$ in $B$.
    Therefore, to list small sets of candidates for $x$ and $z$, it is enough to run the subroutine from \cref{claim:sub1} for $(y, w)$ and $(y, t)$, respectively.
    Each such call either directly returns an obstruction, or discovers that one of the vertices has small neighborhood in $B$, and this neighborhood can be used as a set of $\Oh(\sqrt{k})$ candidates for $x$ or $z$. Examining all such pairs of candidates reveals an induced $C_5$.
    
    \item[(e)] $|U \cap A| = 2$ and the vertices of $U \cap A$ are adjacent in $G$.
    
    Observe that after complementing the graph $G[U]$ and swapping the sides $(A, B)$, this case becomes the previously considered case (d).
    Since all our inner data structures are closed under this operation, the analysis of this case is fully analogous and we omit it.
    
    
    \item[(f)] $|U \cap A| = 2$ and the vertices of $U \cap A$ are non-adjacent in $G$.
    
    Similarly as before, this case is analogous to the already considered case (c).
    
    
    \item[(g)] $|U \cap A| = 1$.
    
    This case is analogous to the case (b).
    
    
    \item[(h)] $|U \cap A| = 0$.
    
    This case is analogous to the case (a).
\end{itemize}
\fi

This finishes the case study.
Let us summarize the implementation of the query
$\oper{findObstruction}()$.
Every such query makes single calls $\oper{listNonEdgesA}()$
and $\oper{listEdgesB}()$, $k^{\Oh(1)}$ calls $\oper{sampleEdges}(\cdot)$ and $\oper{sampleNonEdges}(\cdot)$ (either directly, or via the calls to the subroutines from \cref{claim:sub1} and \cref{claim:sub2}), and spends $\Ohtilde(k^{\Oh(1)})$ time on internal computation.
Therefore, the total running time of $\oper{findObstruction}()$ is bounded by
\[
    k^{\Oh(1)} \cdot \Ohtilde(k^2 d^2) + \Ohtilde(k^{\Oh(1)}) \subseteq \Ohtilde(k^{\Oh(1)} \cdot d^2).
\]
Further, as we may assume that $k\leq n^2$, the probability of
incorrectly answering the query is:
\[
    \Oh\left(n^{-d} \right) + \Oh(k^{\Oh(1)}) \cdot \Oh \left( n^{-d}\right) = \Oh\left(n^{\Oh(1)} \cdot n^{-d}\right) \subseteq \Oh\left(n^{c-d}\right).
\]
for some fixed constant $c$.
\end{proof}

\iflipics
With \cref{lem:obstacles} established, the proof of \cref{thm:main} follows easily: Assuming we maintain the data structure $\nsample$ extended as in \cref{lem:obstacles}, upon each query one simply executes the standard branching algorithm for {\sc{Split Completion}}, using method $\oper{findObstruction}()$ for finding consecutive obstructions and $\oper{update}(\cdot)$ for modifying the graph on the fly during backtracking. Details can be found in \cref{sec:branching}.
\else
\fi

\iflipics
\else
\section{Branching}
\label{sec:branching}

We are ready to prove our main result, \cref{thm:main}, which we recall for convenience.

\begin{mainthm}
\mainthmtext
\end{mainthm}

\begin{proof}
Our main data structure stores:
\begin{itemize}
    \item the values of $k$ and $d$;
    \item an instance $\nsample[n, k, d']$ of the data structure
    from \cref{lem:listing2}, extended with a method
    from \cref{lem:obstacles}, for $d' \coloneqq d+k+c$, where $c$ is
    the constant from \cref{lem:obstacles}.
\end{itemize}

\subparagraph*{Initialization.}
We set the values of $k$ and $d$ for the entire run, and initialize the inner data structure $\nsample$ with corresponding values.
The total running time of this phase is
\[
    \Ohtilde(k \cdot d'^2 \cdot n) = \Ohtilde(k^{\Oh(1)} \cdot d^2 \cdot n).
\]

\subparagraph*{Update.}
Consider an edge update $e = uv$ (that is, we either insert $e$ to the graph $G$, or remove it).
First, we pass this to the structure $\nsample$.
Let $G'$ denote the newly obtained graph.
Now, we should update the answer to our query: we need to determine whether one can insert at most $k$ edges to $G'$ so that the resulting graph is split.

If $\splittance(G') > k$, then we know that $(G',k)$ is a \noinstance of the \textsc{Split Completion} problem.
Assume then that $\splittance(G') \leq k$.
Now we make a query $\oper{findObstruction}()$ on $\nsample$.
If $\oper{findObstruction}()$ reports the answer \ansSplit, then $G'$ is a split graph, and in particular, $(G', k)$ is a~\yesinstance of the \textsc{Split Completion} problem.
Otherwise, $\oper{findObstruction}()$ returns a subset $U \subseteq V(G')$ such that $G'[U]$ is an obstruction.

Since in the \textsc{Split Completion} problem we are only allowed to insert new edges, we know that in order to make the graph $G'$ split, we need to insert to it at least one non-edge of $G'[U]$. Note that $G'[U]$ has at most $5$ non-edges (in the case of $C_5$).
We branch into all these possibilities, and when considering an insertion of a non-edge $e \in \overline{E}(G'[U])$, we temporarily insert it to the graph $G'$ (by doing appropriate call to the structure $\nsample$), and proceed recursively, where at each step either we obtain that the current graph is split, or $\nsample$ localizes an obstruction which needs to be eliminated from the graph by subsequent insertions.
Since we are allowed to insert at most $k$ edges to $G'$, we can exit branches of depth larger than $k$, and thus the total number of nodes in the search tree (and also the number of calls to $\nsample$) is bounded by $\Oh(5^k)$.

The correctness of this recursive procedure is straightforward under an assumption that all the calls to $\nsample$ return correct answers.
The amortized running time of the entire update is bounded by
\[
    \Oh(5^k) \cdot \Ohtilde(k^{\Oh(1)} \cdot d'^2) = \Ohtilde(5^k \cdot k^{\Oh(1)} \cdot d^2)
\]
By the union bound, the probability of reporting an incorrect answer is at most
\[
    \Oh\left(\frac{5^k}{n^{d'-c}}\right) = \Oh\left(\frac{5^k}{n^{k + d+c-c}}\right) \subseteq \Oh\left(\frac{1}{n^d}\right).
\]
Finally, let us remark that this is in fact a Monte Carlo algorithm: once our procedure finds a set of edges $F \subseteq \binom{V(G)}{2}$ to be inserted, we can verify deterministically whether $G'' = (V(G'), E(G') \cup F)$ is a split graph by temporarily inserting edges of $F$ to $\nsample$, checking whether $\splittance(G'') = 0$, and removing edges of $F$ from $\nsample$ afterwards.
\end{proof}

We remark that the proof of \cref{thm:main} can be trivially adjusted to handle the {\sc{Split Edge Deletion}} problem instead: once an obstruction is located we branch into at most five ways of breaking it by an edge deletion instead of an edge insertion. The running time guarantees remain the same.

\fi

\section{Conclusions}\label{sec:conclusions}

In this work we gave a dynamic data structure for the {\sc{Split Completion}} problem with amortized update time $2^{\Oh(k)}\cdot (d\log n)^{\Oh(1)}$, where the error probability is bounded by $1-\Oh(n^{-d})$. Two natural questions that one may ask are the following:
\begin{itemize}
 \item Can the data structure be derandomized? The usage of color coding in \cref{lem:listing1} can be derandomized in a standard way using universal hash families. However, we also apply randomization in a non-trivial way in the proof of \cref{lem:sampling}, and this application we do not know how to derandomize.
 \item A careful analysis of our arguments reveals that the polylogarithmic factor in the obtained update time is $(\log n)^4$.
 Can this be improved? Could one expect update time independent of $n$, that is, of the form $f(k)$ for some computable function $f$?
\end{itemize}
As for amortization, the only place where amortization comes into play is the proof of \cref{lem:listing2}: the updates are queued during the time when the splittance is above $k$, and then they get executed all at once only when the splittance becomes bounded by $k$. Consequently, our data structure actually executes updates in {\em{worst-case time}} $2^{\Oh(k)}\cdot (d\log n)^{\Oh(1)}$ under the assumption that the splittance is never larger than $k$. We believe that relying on amortization to handle the general setting without any promises is a rather small hindrance. Consequently, while the question about removing amortization is definitely relevant, we consider it less interesting than the two questions stated above.

In a wider context, our result together with the previous work of Iwata and Oka~\cite{IwataO14} and of Alman et al.~\cite{AlmanMW20} suggest that investigating parameterized dynamic data structures for graph modification problems might turn out to be a particularly fruitful direction. Concrete problems that we suspect to be amenable to dynamization are {\sc{Planarization}}~\cite{MarxS12,JansenLS14} and various graph modification problems connected to chordal graphs and their subclasses such as interval, proper interval, trivially perfect, or threshold graphs; these were studied extensively, see e.g.~\cite{BliznetsFPP15,BliznetsFPP18,Cao16,Cao17,CaoM16,DrangeFPV15,DrangeP18,FominSV13,FominV13,KeCOLW18,Marx10,HofV13,VillangerHPT09}.

\bibliography{references}

\iflipics

\appendix
\section{Proof of \cref{lem:splittance}}
\label{sec:app-splittance}

\begin{lemsplit}
\lemsplittext
\end{lemsplit}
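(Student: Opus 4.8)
The plan is to follow Ibarra's dynamic algorithm for split-graph membership~\cite{Ibarra08}, strengthened so that after each update it reports the exact value $\splittance(G)$, a certifying partition, and the $\Oh(1)$ vertices that change sides. The whole structure hinges on \cref{fact:split-degrees}: if we keep the vertex degrees sorted as $d_1 \geq \cdots \geq d_n$, maintain the threshold index $m = \max\{i \in [n] : d_i \geq i-1\}$, and track suitable partial sums of this sequence, then $\splittance(G)$ has a closed form evaluable in $\Oh(1)$ time, and a certifying partition is ``the $m$ vertices realizing the largest degrees'' versus the rest.

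Concretely, I would maintain: the adjacency-list representation of $G$ and the edge count $|E(G)|$; a doubly linked bucket list $\Lc$ holding the sorted degree sequence, one node per vertex, with dictionaries translating between vertices and their $\Lc$-nodes and, for every value $i$, pointers to the first and last node of $\Lc$ of value $i$; a pointer to the node at position $m$ in $\Lc$; the explicit sets $A$ and $B$ and the quantity $\Sigma_A := \sum_{v \in A} d_G(v)$, so that $\splittance(G) = m(m-1) - 2\Sigma_A + 2|E(G)|$ can be read off in $\Oh(1)$ time; and the stored value of $\splittance(G)$ itself. On an edgeless $n$-vertex graph all of this is initialized directly in $\Ohtilde(n)$ time, with $m = 1$, $A = \emptyset$, $B = V$, $\Sigma_A = 0$, $\splittance(G) = 0$; the $\oper{splittance}()$ query simply returns the stored value.

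On $\oper{update}(uv)$ I first test adjacency in $\Ohtilde(1)$ to decide between insertion and deletion, and update the adjacency lists and $|E(G)|$. Exactly two degrees, $d_u$ and $d_v$, change, each by the same $\pm 1$; using the first/last-of-value pointers I splice the two corresponding $\Lc$-nodes into their new buckets with $\Oh(1)$ pointer operations, patching the bucket-boundary pointers and the vertex/node dictionaries with $\Ohtilde(1)$ work, and tracking how the rank of the position-$m$ pointer shifts (by at most $2$) through these splices. Then I recompute $m$: by Ibarra's observation~(\cite[Section~7]{Ibarra08}) a single edge update changes $m$ by only $\Oh(1)$, so walking a constant-size window of $\Lc$ outward from the position-$m$ pointer, checking the condition $d_i \geq i-1$ along the way, locates the new $m$; then $\splittance(G)$ is re-evaluated in $\Oh(1)$ from $m$, $|E(G)|$, and $\Sigma_A$ (whose $\Oh(1)$ correction is computed alongside). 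Finally I update $(A,B)$: since only the two moved nodes change their order relative to the other nodes of $\Lc$, and $m$ itself moved by only $\Oh(1)$, at most $\Oh(1)$ vertices cross the position-$m$ boundary of $\Lc$; I collect exactly those into $\Vmoved$, apply the corresponding swaps to $A$, $B$, and $\Sigma_A$, and return $\Vmoved$. Correctness at each step is immediate from \cref{fact:split-degrees}, and the overall running time is $\Ohtilde(1)$.

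The one genuinely non-trivial ingredient, and the step I expect to require the most care, is the bound $|m' - m| = \Oh(1)$ under a single edge update: this is precisely what keeps both the relocation of the position-$m$ pointer and the recomputation of $(A,B)$ within $\Oh(1)$ time, and for it I would invoke Ibarra's analysis rather than reproving it. The rest is routine bookkeeping --- ensuring that the bucket-boundary pointers, the vertex$\leftrightarrow$node dictionaries, $|E(G)|$, and $\Sigma_A$ all remain mutually consistent after the two splices, and observing that although a node whose degree changed may jump many positions inside $\Lc$, it still forces only $\Oh(1)$ membership changes in $(A,B)$, because the relative order of all untouched nodes is preserved and so only the vertex currently sitting at the position-$m$ boundary can be displaced across it.
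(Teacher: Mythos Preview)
Your proposal is correct and follows essentially the same approach as the paper: maintain the sorted degree sequence in a bucketed list with first/last-of-value pointers, keep a pointer to position $m$, invoke Ibarra's observation that $|m'-m|=\Oh(1)$ after a single edge update, and conclude that only $\Oh(1)$ vertices cross the $A$/$B$ boundary. The only cosmetic difference is that you explicitly track $\Sigma_A$ and $|E(G)|$ to evaluate $\splittance(G)$ via the closed form $m(m-1)-2\Sigma_A+2|E(G)|$, whereas the paper simply says the new splittance is computed from the old one in $\Oh(1)$ using \cref{fact:split-degrees}; both are equivalent bookkeeping.
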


\begin{proof}

\end{proof}

\section{Missing proofs from \cref{sec:edges-across}}
\label{sec:app-edges-across}



\begin{claimcolor}
    \claimcolortext 
\end{claimcolor}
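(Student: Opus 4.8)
The plan is a standard color-coding amplification. I would fix a single vertex $b \in N$ and a single coloring $\chi_i$, bound from below the probability that $\chi_i$ exposes $b$ by an absolute constant, then amplify over the $\gamma d \log n$ independently sampled colorings and finish with a union bound over the vertices of $N$.

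First I would dispose of the degenerate case $\ell = 1$: then $|N| \le 1$, so the unique vertex of $N$ (if it exists) is exposed by every coloring and the statement is vacuous; assume henceforth $\ell \ge 2$. Fix $b \in N$ and $i \in [\gamma d \log n]$. Since $\chi_i$ assigns to each vertex of $G$ a color drawn independently and uniformly from $[\ell]$, the event that $b$ is exposed by $\chi_i$ is exactly the event that none of the other $|N| - 1$ vertices of $N$ receives the color $\chi_i(b)$, which has probability exactly $\left(\frac{\ell - 1}{\ell}\right)^{|N| - 1}$. Using $|N| \le \ell$ and the elementary inequality $\left(1 - \frac{1}{\ell}\right)^{\ell - 1} > \frac{1}{10e}$, valid for every integer $\ell \ge 2$, I obtain
\[
    \Pr\big[\, b \text{ is exposed by } \chi_i \,\big] \;\ge\; \left(\frac{\ell - 1}{\ell}\right)^{\ell - 1} \;>\; \frac{1}{10e}.
\]
The point worth a moment's care is that this lower bound is a universal constant, independent of $n$, $\ell$, and $d$; this is what forces one to go through the worst case $|N| = \ell$ and to invoke the uniform estimate on $(1 - 1/\ell)^{\ell - 1}$, rather than a bound that degrades with $\ell$. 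I expect this to be essentially the only non-mechanical step.

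Since the colorings $\chi_1, \dots, \chi_{\gamma d \log n}$ are sampled independently, the probability that a fixed $b \in N$ is exposed by none of them is at most $\left(1 - \frac{1}{10e}\right)^{\gamma d \log n}$. A union bound over the at most $|N| \le n$ choices of $b$ then bounds the probability of the event complementary to the one in the claim:
\[
    \Pr\big[\, \exists\, b \in N \text{ that is exposed by no } \chi_i \,\big] \;\le\; n \cdot \left(1 - \frac{1}{10e}\right)^{\gamma d \log n}.
\]
Finally, choosing $\gamma$ to be a sufficiently large absolute constant — any $\gamma$ with $\gamma \log \frac{1}{1 - 1/(10e)} \ge 2$ suffices, using $d \ge 1$ — makes the factor $\left(1 - \frac{1}{10e}\right)^{\gamma d \log n}$ at most $n^{-(d+1)}$, so the right-hand side above is at most $n^{-d}$, which is precisely the bound claimed. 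Beyond this bookkeeping of constants the argument is entirely routine.
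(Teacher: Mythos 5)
Your proposal is correct and follows essentially the same route as the paper's proof: fix $b$ and $\chi_i$, compute the exposure probability as $\left(\frac{\ell-1}{\ell}\right)^{|N|-1}$, bound it below by the constant $\frac{1}{10e}$ using $|N|\leq\ell$, then amplify over the independent colorings and finish with a union bound over $b\in N$, choosing $\gamma$ large enough. The only stylistic difference is that you argue the exposure probability directly via the event description, while the paper writes out the counting fraction $\frac{\ell\cdot(\ell-1)^{|N|-1}}{\ell^{|N|}}$; this is the same computation.
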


\begin{claimproof}

\end{claimproof}

\begin{claimretrieve}
    \claimretrievetext 
\end{claimretrieve}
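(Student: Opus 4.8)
The plan is to turn the hypothesis of the claim into an exact identity and then evaluate that identity from the counters the data structure already stores. Let $a\in A$ be the vertex of the query being answered, so that $N=N_G(a)\cap B\cap S$ as in the proof of \cref{lem:listing1}. Since $b$ is exposed by $\chi_i$ and $\chi_i(b)=c$, the vertex $b$ is the \emph{only} element of $N$ colored $c$ by $\chi_i$; hence $N\cap\chi_i^{-1}(c)=\{b\}$, and writing $S_{i,c}=S\cap\chi_i^{-1}(c)$,
\[
    b=\sumset\bigl(N\cap\chi_i^{-1}(c)\bigr)=\sumset\bigl(N_G(a)\cap B\cap S_{i,c}\bigr).
\]
So it suffices to compute the sum of identifiers of $N_G(a)\cap B\cap S_{i,c}$ in time $\Ohtilde(k)$.

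I would obtain this sum by replaying the inclusion--exclusion argument from the proof of \cref{claim:sdegree} with $\sumset(\cdot)$ in place of $|\cdot|$. Splitting along the partition $(A,B)$ of $V(G)$ gives
\[
    \sumset\bigl(N_G(a)\cap B\cap S_{i,c}\bigr)=\oper{idSumS}_{i,c}(a)-\sumset\bigl(N_G(a)\cap A\cap S_{i,c}\bigr),
\]
using the stored value $\oper{idSumS}_{i,c}(a)=\sumset(N_G(a)\cap S_{i,c})$. Peeling $a$ and its non-neighbors out of $A\cap S_{i,c}$, which partitions into $\{a\}\cap S_{i,c}$, $N_G(a)\cap A\cap S_{i,c}$ and $\Non_G(a)\cap A\cap S_{i,c}$, gives
\[
    \sumset\bigl(N_G(a)\cap A\cap S_{i,c}\bigr)=\oper{idSumAS}_{i,c}-a\cdot\indic_{a\in S_{i,c}}-\sumset\bigl(\Non_G(a)\cap A\cap S_{i,c}\bigr),
\]
with $\oper{idSumAS}_{i,c}=\sumset(A\cap S_{i,c})$ again stored.

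The only term in these formulas that is not a single $\Ohtilde(1)$ lookup is $\sumset(\Non_G(a)\cap A\cap S_{i,c})$, and here I would exploit that $\Non_G(a)\cap A$ consists exactly of the $a$-endpoints of the at most $k$ non-edges listed in $\nonEdgesA$ (recall $\splittance_G(A,B)\le k$): iterate over $\nonEdgesA$, and for each non-neighbor $u$ of $a$ in $A$ add $u$ to a running sum whenever the membership test $u\in S_{i,c}$ succeeds, which is an $\Ohtilde(1)$ set query. Combining the three displays then yields $b$ in total time $\Ohtilde(k)$. The symmetric case, retrieving an exposed vertex of $\Non_G(b)\cap A\cap S$ for the query $\oper{listNonNeighborsAS}(b)$, is handled the same way, obtaining $\sumset(\Non_G(b)\cap S_{i,c})$ from the value $\sumset(S_{i,c})$ precomputed at initialization together with the stored $\sumset(N_G(b)\cap S_{i,c})$, and using $\edgesB$ in place of $\nonEdgesA$. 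I do not foresee a genuine obstacle here: the argument is a routine double inclusion--exclusion, and the only points demanding care are the indicator correction $\indic_{a\in S_{i,c}}$ and invoking ``exposed'' precisely enough to conclude that $N\cap\chi_i^{-1}(c)$ has exactly one element rather than merely at most one.
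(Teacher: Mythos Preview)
Your proposal is correct and matches the paper's proof essentially line for line: you use the same observation that exposure implies $N\cap\chi_i^{-1}(c)=\{b\}$, the same inclusion--exclusion to rewrite $\sumset(N_G(a)\cap B\cap S_{i,c})$ via $\oper{idSumS}_{i,c}(a)$ and $\oper{idSumAS}_{i,c}$ with the $\indic_{a\in S_{i,c}}$ correction, and the same enumeration of $\Non_G(a)\cap A$ through $\nonEdgesA$. Your added remark on the symmetric query is also handled identically in the paper, at the end of the surrounding lemma's proof.
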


\begin{claimproof}

\end{claimproof}

\begin{claimsmallN}
    \claimsmallNtext 
\end{claimsmallN}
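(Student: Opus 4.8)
The plan is to exhibit one concrete ``good'' index $i$, apply a Chernoff bound for each $j$ separately, and then exploit independence across $j$. First I would observe that, since $|N| > 72\ell$, the sequence of expectations $\E|N_{i,j}| = |N|/2^i$ is a geometric sequence that halves with each increment of $i$; at $i=1$ it exceeds $36\ell$, and within $i \in [\log n]$ it drops below $24\ell$. Because the target window $[24\ell, 48\ell]$ has multiplicative width $2$, a halving sequence cannot jump over it, so there is an index $i \in [\log n]$ with $\mu \coloneqq \E|N_{i,j}| = |N|/2^i \in [24\ell, 48\ell]$. Fix such an $i$; note that these bounds on $\mu$ hold simultaneously for every $j \in [d\log n]$, since $\mu$ does not depend on $j$.

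Next, for each fixed $j$ I would write $|N_{i,j}| = \sum_{v \in N} X_v$, where $X_v = \indic_{v \in V_{i,j}}$ are independent Bernoulli variables with $\Pr(X_v = 1) = 2^{-i}$, so that $\E|N_{i,j}| = \mu$. A standard multiplicative Chernoff bound then gives
\[
    \Pr\left(|N_{i,j}| \notin \left(\tfrac{\mu}{2}, \tfrac{3\mu}{2}\right)\right) \leq 2e^{-\mu/12}.
\]
Plugging in $\mu \geq 24\ell$ yields $\Pr(\cdot) \leq 2e^{-2\ell} < 2 \cdot 2^{-2\ell} < 2^{-\ell}$. Moreover, since $\mu/2 \geq 12\ell$ and $3\mu/2 \leq 72\ell$, the interval $(\mu/2, 3\mu/2)$ is contained in $(12\ell, 72\ell)$, so the event ``$|N_{i,j}| \notin (12\ell, 72\ell)$'' is contained in the event just bounded, and therefore also has probability at most $2^{-\ell}$.

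Finally, the sets $V_{i,j}$ for distinct $j$ are drawn independently, hence the variables $|N_{i,j}|$ are independent over $j \in [d\log n]$. Multiplying the per-$j$ bounds gives
\[
    \Pr\left(|N_{i,j}| \notin (12\ell, 72\ell) \text{ for all } j \in [d\log n]\right) \leq \left(2^{-\ell}\right)^{d\log n} = 2^{-\ell d \log n} = n^{-d\ell},
\]
which is exactly the claimed bound. I do not anticipate a real obstacle here; the two points needing a little care are confirming that the halving sequence of expectations genuinely lands inside the target window (this is precisely where the hypothesis $|N| > 72\ell$ is used, ruling out both ``never reaches the window'' and ``jumps over it''), and fixing the constant $12$ in the Chernoff exponent so that the estimate $2e^{-2\ell} < 2^{-\ell}$ holds for every $\ell \geq 1$.
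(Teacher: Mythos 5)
Your proof is correct and follows essentially the same route as the paper: locate an index $i$ with $\E|N_{i,j}| = |N|/2^i \in [24\ell,48\ell]$, apply a multiplicative Chernoff bound per $j$, and multiply across independent $j$'s. You spell out two steps the paper leaves implicit (why the halving sequence must land inside a width-$2$ window, and the containment $(\mu/2,3\mu/2)\subseteq(12\ell,72\ell)$), and you also implicitly correct a small typo in the paper, which writes $\mu = n/2^i$ where it should be $|N|/2^i$.
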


\begin{claimproof}

\end{claimproof}

\section{Missing proofs from \cref{sec:edges-inside}}
\label{sec:app-edges-inside}

Here, we fill in the missing details from the proof of \cref{lem:listing2}.

\subparagraph*{Initialization}

\begin{claimintersection}
\claimintersectiontext
\end{claimintersection}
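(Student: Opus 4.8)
The plan is to bound each of the two intersection sizes by a direct double counting of edges inside the relevant vertex set, exploiting that the splittance is small both at time step $s$ and at time step $t$. I will work out the bound on $m \coloneqq |A_s \cap B_t|$ in detail; the bound on $|A_t \cap B_s|$ is fully symmetric, obtained by swapping the roles of edges and non-edges (equivalently, by complementing the graph and swapping the two sides of the partition). No genuine obstacle is expected here; the one point requiring care is to count only the \emph{deletions} (resp.\ \emph{insertions}) among the $r$ updates that act inside the set under consideration, rather than all updates.

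First I would locate the set $A_s \cap B_t$ on the ``clique side'' of $G_s$ and on the ``independent side'' of $G_t$. Since $\splittance(G_s) = \splittance_{G_s}(A_s, B_s) \le k$, the induced subgraph $G_s[A_s]$ is missing at most $k$ edges, so in particular $G_s[A_s \cap B_t]$, being an induced subgraph of $G_s[A_s]$ on $m$ vertices, has at least $\binom{m}{2} - k$ edges. On the other hand, $\splittance(G_t) \le k$ means $G_t[B_t]$ has at most $k$ edges, and hence $G_t[A_s \cap B_t]$ has at most $k$ edges. Finally, going from $G_s$ to $G_t$ uses at most $r$ single-edge updates, so at most $r$ of them are edge deletions; edges that are inserted or deleted with an endpoint outside $A_s \cap B_t$ are irrelevant, so the number of edges inside $A_s \cap B_t$ can decrease by at most $r$ along the update sequence. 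Putting these three observations together yields $\binom{m}{2} - k \le k + r$, i.e.\ $\binom{m}{2} \le 2k + r$, from which $m \le \Oh(\sqrt{r + k})$ follows immediately.

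For the second inequality I would argue symmetrically. Writing $m' \coloneqq |A_t \cap B_s|$, note that $A_t \cap B_s \subseteq B_s$, so $G_s[A_t \cap B_s]$ is an induced subgraph of the independent side $G_s[B_s]$ of $G_s$ and therefore has at most $k$ edges; and $A_t \cap B_s \subseteq A_t$, so $G_t[A_t \cap B_s]$ is an induced subgraph of the clique side $G_t[A_t]$ of $G_t$, which misses at most $k$ edges, hence $G_t[A_t \cap B_s]$ has at least $\binom{m'}{2} - k$ edges. Going from $G_s$ to $G_t$ we perform at most $r$ edge insertions, so the number of edges inside $A_t \cap B_s$ grows by at most $r$; combining gives $\binom{m'}{2} - k \le k + r$, hence $\binom{m'}{2} \le 2k + r$ and $m' \le \Oh(\sqrt{r + k})$, as claimed.
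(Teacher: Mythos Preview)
Your proof is correct and essentially identical to the paper's: both bound $|A_s\cap B_t|$ by observing that $G_s[A_s\cap B_t]$ has at least $\binom{m}{2}-k$ edges, $G_t[A_s\cap B_t]$ has at most $k$ edges, and at most $r$ edges can be lost in between, yielding $\binom{m}{2}\le 2k+r$. The only difference is cosmetic --- you spell out the symmetric case for $|A_t\cap B_s|$ explicitly, whereas the paper just says ``analogous''.
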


\begin{claimproof}

\end{claimproof}

\subparagraph*{Queries.}

\section{Missing proofs from \cref{sec:obstacles}}
\label{sec:app-obstacles}

Here, we fill in the missing parts from the proof of \cref{lem:obstacles}.



\subparagraph*{Localizing $2K_2$.}
We continue the case analysis started in the proof of \cref{lem:obstacles}.

\begin{enumerate}
    
\end{enumerate}

\subparagraph*{Localizing $C_5$.}

\fi

\end{document}